\newtheorem{theorem}{{Theorem}}
\newtheorem{lemma}[theorem]{{Lemma}}
\newcommand{\cC}{{\cal C}}
\newcommand{\cE}{{\cal E}}
\newcommand{\cG}{{\cal G}}
\DeclareMathAlphabet{\mathbfsl}{OT1}{ppl}{b}{it} %{OT1}{cmr}{bx}{it}
\newcommand{\bU}{\mathbfsl{U}}
\newcommand{\be}[1]{\begin{equation}\label{#1}}
\newcommand{\ee}{\end{equation}}
\newcommand{\eq}[1]{(\ref{#1})}
\renewcommand{\leq}{\leqslant}
\renewcommand{\geq}{\geqslant}
\newcommand{\script}[1]{{\mathscr #1}}
\newcommand{\Tref}[1]{Theo\-rem\,\ref{#1}}
\newcommand{\Lref}[1]{Lem\-ma\,\ref{#1}}
\newcommand{\Cref}[1]{Co\-ro\-lla\-ry\,\ref{#1}}
\newcommand{\deff}{\mbox{$\stackrel{\rm def}{=}$}}
\newcommand{\Strut}[2]{\rule[-#2]{0cm}{#1}}
\newcommand{\hU}{\widehat{\bU}}
\newcommand{\sX}{\script{X}}
\newcommand{\sY}{\script{Y}}
\newcommand{\shalf}{\mbox{\raisebox{.8mm}{\footnotesize $\scriptstyle 1$}
\footnotesize$\!\!\! / \!\!\!$ \raisebox{-.8mm}{\footnotesize
$\scriptstyle 2$}}}
\newcommand{\eps}{{\epsilon}}
\newcommand{\F}{\mathbb{F}}
\newcommand{\Gs}{G^{\otimes s}}
\newcommand{\hatu}{\hat{u}}
\begin{document}

\title{Performance Limits and Practical Decoding of Interleaved Reed-Solomon Polar Concatenated Codes}
%
%
% author names and IEEE memberships
% note positions of commas and nonbreaking spaces ( ~ ) LaTeX will not break
% a structure at a ~ so this keeps an author's name from being broken across
% two lines.
% use \thanks{} to gain access to the first footnote area
% a separate \thanks must be used for each paragraph as LaTeX2e's \thanks
% was not built to handle multiple paragraphs
%

\author{%
\authorblockN{\large{
Hessam Mahdavifar, %~\IEEEmembership{Member,~IEEE,}
        Mostafa El-Khamy, %~\IEEEmembership{Member,~IEEE,}
        Jungwon Lee,
        Inyup Kang}}\\
\thanks{%
   Hessam Mahdavifar, Mostafa El-Khamy, Junwon Lee and Inyup Kang are with the
   Mobile Solutions Lab, Samsung Research America, San Diego, CA 92121, U.S.A.
   (e-mail: \{h.mahdavifar,\,mostafa.e,\,jungwon2.lee,\,inyup.kang\}@samsung.com).
}}

%\author{Hessam Mahdavifar, %~\IEEEmembership{Member,~IEEE,}
 %       Mostafa El-Khamy, %~\IEEEmembership{Member,~IEEE,}
  %      Jungwon Lee,
   %     Inyup Kang}

% note the % following the last \IEEEmembership and also \thanks -
% these prevent an unwanted space from occurring between the last author name
% and the end of the author line. i.e., if you had this:
%
% \author{....lastname \thanks{...} \thanks{...} }
%                     ^------------^------------^----Do not want these spaces!
%
% a space would be appended to the last name and could cause every name on that
% line to be shifted left slightly. This is one of those "LaTeX things". For
% instance, "\textbf{A} \textbf{B}" will typeset as "A B" not "AB". To get
% "AB" then you have to do: "\textbf{A}\textbf{B}"
% \thanks is no different in this regard, so shield the last } of each \thanks
% that ends a line with a % and do not let a space in before the next \thanks.
% Spaces after \IEEEmembership other than the last one are OK (and needed) as
% you are supposed to have spaces between the names. For what it is worth,
% this is a minor point as most people would not even notice if the said evil
% space somehow managed to creep in.

% The paper headers
%\markboth{IEEE Transactions on Communications}%
%{To be submitted}
\maketitle

\begin{abstract}
A scheme for concatenating the recently invented polar codes with non-binary MDS codes, as Reed-Solomon codes, is considered. By concatenating binary polar codes with interleaved Reed-Solomon codes, we prove that the proposed concatenation scheme captures the capacity-achieving property of polar codes, while having a significantly better error-decay rate. We show that for any $\epsilon > 0$, and total frame length $N$,  the parameters of the scheme can be set such that the frame error probability is less than $2^{-N^{1-\epsilon}}$, while the scheme is still capacity achieving. This improves upon $2^{-N^{0.5-\eps}}$, the frame error probability of Arikan's polar codes. The proposed concatenated polar codes and Arikan's polar codes are also compared for transmission over channels with erasure bursts. We provide a sufficient condition on the length of erasure burst which guarantees failure of the polar decoder. On the other hand, it is shown that the parameters of the concatenated polar code can be set in such a way that the capacity-achieving properties of polar codes are preserved. We also propose decoding algorithms for concatenated polar codes, which significantly improve the error-rate performance at finite block lengths while preserving the low decoding complexity.
\end{abstract}

%\begin{IEEEkeywords}
%Polar codes, Reed-Solomon codes, MDS, error exponent, concatenation
%\end{IEEEkeywords}

% For peer review papers, you can put extra information on the cover
% page as needed:
% \ifCLASSOPTIONpeerreview
% \begin{center} \bfseries EDICS Category: 3-BBND \end{center}
% \fi
%
% For peerreview papers, this IEEEtran command inserts a page break and
% creates the second title. It will be ignored for other modes.
%\IEEEpeerreviewmaketitle

%=======================================================================%
%                                                                       %
%    1. INTRODUCTION                                                    %
%                                                                       %
%=======================================================================%
\section{Introduction}
\label{sec:Introduction}
Polar codes, introduced by Arikan \cite{Arikan, AT}, are the most recent breakthrough in coding theory. Polar codes are the first and, currently, the only family of codes with explicit construction (no ensemble to pick from) to achieve the capacity of a certain family of channels (binary input symmetric discrete memory-less channels) as the block length goes to infinity. They have encoding and decoding algorithms with very low complexity. Their encoding complexity is $n \text{log} n$ and their successive cancellation (SC) decoding complexity is $O(n \text{log} n)$, where $n$ is the length of the code. However, at moderate block lengths, their performance does not compete with world's best known codes, which prevents them from being implemented in practice. Also, their error exponent decreases slowly as the block length increases, where the error-decay rate of polar codes under successive cancellation decoding is asymptotically $O(2^{-n^{0.5-\eps}})$. In this paper, we aim at providing techniques to make polar codes more practical, by providing schemes that improve their finite length performance, while preserving their low decoding complexity.

Concatenating inner polar codes with outer linear codes (or other variations of concatenation like parallel concatenation) is a promising path towards making them more practical \cite{BJE,EP}. By carefully constructing such codes, the concatenated construction can inherit the low encoding and decoding complexities of the inner polar code, while having significantly improved error-rate performance, in comparison with the inner polar code. The performance and decoding complexity of the concatenated code will also depend on the outer code used, the concatenation scheme and the decoding algorithms used for decoding the component codes. We chose Reed-Solomon (RS) codes  as outer codes as they are maximal distance separable (MDS) codes, and hence have the largest bounded-distance error-correction capability at a specified code rate. RS codes also have excellent burst error-correction capability. % \cite{JPL}.

Recent investigations have shown the possibility of improving the bound on the error-decay rate of polar codes by concatenating them with RS codes \cite{BJE}. However, this work assumed a conventional method of concatenation, which required the cardinality of the outer RS code alphabet to be exponential in the block length of the inner polar code, which makes it infeasible for implementation in practical systems.  In this paper, we propose a scheme for improving the error-decay rate of polar codes by concatenating them with interleaved block codes. When deploying our proposed scheme with outer interleaved RS block codes, the RS alphabet cardinality is no longer exponential and, in fact, is a design parameter which can be chosen arbitrarily. Furthermore, we show that the code parameters can be set such that the total scheme still achieves the capacity while the error-decay rate is asymptotically $2^{-N^{1-\eps}}$ for any $\eps > 0$, where $N$ is the total block length of the concatenated scheme. This bound provides considerable improvements upon $2^{-N^{0.5 - \eps}}$, the error-decay rate of Arikan's polar codes, and upon the bound of \cite{BJE}. Notice that $2^{-N}$ is the information theoretic upper bound on the probability of error of any capacity achieving code. This can be derived from the main result of \cite{PPV}. Therefore, our concatenated code fills the gap with the ultimate bound on the error decay rate of any capacity achieving code.

Considering transmission over channels with erasure bursts we show that the proposed concatenated polar codes perform well while we prove that Arikan's polar codes are very weak in this regard. For theoretical analysis, transmission over an arbitrary binary symmetric channel channel is considered while the received symbols encounter a single erasure burst. We prove that if the length of erasure burst is at least $2\sqrt{n}-1$, where $n$ is the length of polar code, then the successive cancellation decoder fails to recover the transmitted message. On the other hand, we show that the parameters of the concatenated polar code can be set in such a way that the concatenated polar code approaches the capacity of the channel with the same error-decay rate as in polar codes. For simulations, Gilbert-Elliot model is considered wherein a channel with two states is assumed. The good channel state is a binary erasure channel and the bad channel state is always erasure. For some specific parameters, we show that the concatenated polar code perform well as expected, while the non-concatenated polar code fails with a very high probability of error.

To construct the concatenated polar code at finite block length, we propose a rate-adaptive method to minimize the rate-loss resulting from the outer block code. It is known from the theory of polar codes that not all of the selected good bit-channels have the same performance. Some of the information bits observe very strong and almost noiseless channels, while some other information bits observe weaker channels. This implies that an unequal protection by the outer code is needed, i.e. the  strongest information bit-channels do not need another level of protection by the outer code, while the rest are protected by certain codes whose rates are determined by the error probability of the corresponding bit-channels. Hence, a criterion is established for determining the proper rates of the outer interleaved block codes.

We propose a successive method for decoding the RS-polar concatenated scheme, which is possible by the proposed interleaved concatenation scheme, where the symbols of each RS codeword are distributed over the same coordinates of multiple polar codewords. In the successive cancellation (SC) decoding of the inner polar codes, the very first bits of each polar codeword that are protected by the first outer RS code are decoded first. Then these bits are passed as symbols to the first outer decoder. RS decoding is done on the first RS word to correct any residual errors from  the polar decoders, and pass the updated information back to the SC polar decoders. Then the SC decoders of all polar codes update their first decoded bits, and use that updated information to continue successive decoding for the following bits corresponding to the symbols of the subsequent outer words.  Therefore, the errors from SC decoding do not propagate through the whole polar codeword, which  significantly improves the performance of our scheme. Another main advantage of this proposed scheme, is that all polar codes are decoded in parallel which significantly reduces the decoding latency.

Depending on the chosen outer code and its chosen decoding algorithm, the information exchanged between the inner SC decoders and the outer decoder can be soft information, as log-likelihood ratios (LLRs), or hard decisioned bits. We take advantage of the soft information generated by the successive cancellation decoder of polar codes to perform generalized minimum distance (GMD) list decoding \cite{F,K} for the outer RS code, which enhances the error performance. For further improvements, the SC decoder is modified so that it generates the likelihoods of all the possible RS symbols for GMD decoding. After GMD decoding of each component RS word, the most likely candidate codeword relative to the received word is picked from the list, and SC decoder utilizes the updated RS soft and hard outputs in further decoding of the component polar codes. In the case that outer codes are RS codes, more complex and better soft decoding algorithms exist, e.g. \cite{KV,EM}, however GMD was chosen to preserve the bound on the decoding complexity of the polar codes.

The rest of this paper is organized as follows: In Section\,\ref{sec:two}, some necessary background on Arikan's polar codes is provided. In Section\,\ref{sec:three}, the proposed scheme is explained in more details. We also explain how to modify the scheme to get a rate-adaptive construction, which significantly improves the rate of finite length constructions. In Section\,\ref{sec:four}, we describe our proposed decoding algorithms for the concatenated polar code scheme that improve the performance at finite block length. Asymptotic analysis of error correction performance along with the proof of the bounds on the error-decay rate are provided in Section\,\ref{sec:five}. Also, the performance of the proposed concatenated polar code over channels with erasure burst is discussed. Simulated results are shown in Section \ref{sec:six}. Finally, we conclude the paper in Section\,\ref{sec:seven}.

%=======================================================================%
                                                                       %
 %    2. Preliminaries                                                  %
                                                                       %
%=======================================================================%
\section{Arikan's polar codes}
\label{sec:two}

A brief overview of the groundbreaking work of
Arikan~\cite{Arikan} 
and others~\cite{AT,Korada,KSU} on polar 
codes and channel polarization is provided in this section. 

The construction of polar codes is based on a phenomenon called \emph{channel polarization} discovered by Arikan \cite{Arikan}. Let 
\be{G-def}
G 
\ = \
\left[ 
\begin{array}{c@{\hspace{1.25ex}}c}
1 & 0\\
1 & 1\\ 
\end{array}
\right]
\ee
The Kronecker powers of $G$ are defined by induction. $G^{\otimes 1} = G$ and for any $i > 1$:
$$
G^{\otimes (i)}
\ = \
\left[ 
\begin{array}{c@{\hspace{1.25ex}}c}
G^{\otimes (i-1)} & 0\\
G^{\otimes (i-1)} & G^{\otimes (i-1)}\\ 
\end{array}
\right]
$$
Observe that $G^{\otimes (i)}$ is a $2^i \times 2^i$ matrix. Let $n = 2^s$. Then the $n \times n$ polarization transform matrix is defined as
$G_n \,\smash{\raisebox{-0.35ex}{\deff}}\, R_n \Gs$, 
where $R_n$ is the bit-reversal permutation matrix defined 
in \cite[Section\,VII-B]{Arikan}. Let $(U_1, U_2,\dots,U_n)$, denoted by $U_1^n$, be a block of $n$ independent and uniform binary random variables. Let also $X_1^n = U_1^n G_n$. $X_i$'s are transmitted through $n$ independent copies of a binary-input discrete memoryless channel (B-DMC) $W$. The output is denoted by $Y_1^n$. This transformation with input $U_1^n$ and output $Y_1^n$ is called the polar transformation. In this transformation, $n$ independent uses of $W$ is split into $n$ bit-channels assuming that a successive cancellation decoder is deployed at the output. Under this decoding method, all the bits $U_1^{i-1}$ are already decoded and are available at the time that $U_i$ is being decoded. This channel is called the $i$-th bit channel and is denoted by $W^{(i)}_n$. The channel polarization theorem  
states that as $n$ goes to infinity, the bit-channels start polarizing meaning that they either become a noise-less channel or a pure-noise channel. The definition of bit-channels and the channel polarization theorem are discussed more precisely next. 

For any discrete memory-less channel $W : \sX \rightarrow \sY$, let $W(y|x)$ denote the probability of receiving $y \in \!\sY$ given that $x \in \!\sX$ was sent, for any $x \in \!\sX$ and $y \in \!\sY$. Let $W^n: \sX^n \rightarrow \sY^n$ denote the channel that results from $n$ independent copies of $W$ in the polar transformation i.e. 
\be{Wn}
W^n\kern-0.5pt(y^n_1|x^n_1) 
\,\ \deff\,\
\prod_{i=1}^n W(y_i|x_i)
\vspace{-0.25ex}
\ee
Then the combined channel $\widetilde{W}$ is defined
with transition probabilities given by
\be{Wtilde}
\widetilde{W}(y^n_1|u^n_1) 
\,\ \deff\,\
W^n\kern-1pt\bigl(y^n_1\hspace{1pt}{\bigm|}\hspace{1pt}u^n_1\hspace{1pt} G_n\bigr)
\kern1pt = \kern2pt
W^n\kern-1pt
\bigl(y^n_1\hspace{1pt}{\bigm|}\hspace{1pt}u^n_1\hspace{1pt} R_n \Gs \bigr)
\ee
This is the channel that the random vector
$(U_1,U_2,\dots,U_n)$ observes through the polar transformation. The transition probabilities for the bit-channel $W^{(i)}_n$ is given as follows:
\be{Wi-def}
W^{(i)}_n\bigl( y^n_1,u^{i-1}_1 | \hspace{1pt}u_i)
\,\ \deff \,\
\frac{1}{2^{n-1}}\hspace{-5pt}
\sum_{u_{i+1}^n \in \{0,1\}^{n-i}} \hspace{-12pt}
\widetilde{W}\Bigl(y^n_1\hspace{1pt}{\bigm|}\hspace{1pt}
u_1^n \Bigr)
\ee

For any B-DMC $W$, the \emph{Bhattacharyya parameter} of $W$ is
$$ 
Z(W)
\,\ \deff\kern1pt
\sum_{y\in\sY} \!\sqrt{W(y|0)W(y|1)}
$$

It is easy to show that the Bhattacharyya parameter $Z(W)$ is always between $0$ and $1$. Bhattacharyya parameter can be regarded as a measure to determine how good the channel $W$ is. Channels with $Z(W)$ close to zero are almost noiseless, while channels with $Z(W)$ close to one are almost pure-noise channels. More precisely, it can be proved that the probability of error of a binary symmetric memoryless channel (BSM) is upperbounded by its Bhattacharyya parameter. 

The following recursive formulas hold for Bhattacharyya parameters of individual bit-channels in the polar transformation:
\begin{align}
\label{Z_recursion1}
Z(W_{2n}^{(2i-1)}) &\leq 2Z(W_n^{(i)}) - Z(W_n^{(i)})^2\\
\label{Z_recursion2}
Z(W_{2n}^{(2i)}) &= Z(W_n^{(i)})^2
\end{align}
The equality happens in \eq{Z_recursion1} if $W$ is a binary erasure channel. 

The set of \emph{good bit-channels} is defined based on their Bhattacharyya parameters ~\cite{AT,Korada}. Let 
\smash{$[n] \ \raisebox{-0.2ex}{\deff}\ \{1,2,\dots,n\}$}\Strut{2.15ex}{0ex}
and let $\beta \,{<}\, \shalf$ be a fixed positive constant.
Then the index sets of the good bit-channels are given by
\begin{eqnarray}
\label{good-def}
\cG_n(W,\beta)
&\hspace*{-6pt}{\deff}\hspace*{-6pt}&
\left\{\, i \in [n] ~:~ Z(W^{(i)}_n) < 2^{-n^{\beta}}\!\!/n \hspace{1pt}\right\}
\end{eqnarray}

\begin{theorem}
\label{polar_thm1}
~\cite{Arikan,AT} For any BSM channel $W$ and any constant $\beta \,{<}\, \shalf$ we have
$$
\lim_{n \to \infty} \frac{\left|\cG_n(W,\beta)\right|}{n}
\,=\, 
\cC(W)\vspace{1.5ex}
$$
\end{theorem}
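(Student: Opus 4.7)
The plan is to analyze a random-path process on the binary polarization tree and establish a two-phase convergence argument, following the approach of Arikan and Arikan--Telatar. Specifically, let $s = \log_2 n$, and let $B_1,B_2,\ldots,B_s$ be i.i.d.\ uniform $\{-,+\}$ random variables encoding a uniformly random leaf $i \in [n]$ of the depth-$s$ tree. Define $Z_t$ to be the Bhattacharyya parameter at depth $t$ along this random path and $I_t$ the symmetric mutual information of the same channel. By \eqref{Z_recursion1}--\eqref{Z_recursion2}, when $B_{t+1}=+$ we have $Z_{t+1}=Z_t^2$ (squaring) and when $B_{t+1}=-$ we have $Z_{t+1}\leq 2Z_t-Z_t^2$. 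The target is to show $\Pr(Z_s < 2^{-n^\beta}/n)\to \cC(W)$ for every $\beta<1/2$, from which the theorem follows by averaging over $i\in[n]$.

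First I would establish \emph{basic polarization}: the process $(I_t)$ is a bounded $[0,1]$-valued martingale with respect to the filtration generated by $B_1,\ldots,B_t$, because the mutual information is preserved on average by the $(W^-,W^+)$ split. By the martingale convergence theorem, $I_t\to I_\infty$ almost surely, and the martingale difference $I_{t+1}-I_t$ must vanish in $L^2$. A direct calculation (using the strict gap between $I(W^-)+I(W^+)$ and $2I(W)$ for channels bounded away from $\{0,1\}$) then forces $I_\infty \in\{0,1\}$ a.s. Equivalently, $Z_t\to Z_\infty$ a.s.\ with $Z_\infty\in\{0,1\}$, and since $\mathbb{E}[I_\infty]=I_0=\cC(W)$ we obtain $\Pr(Z_\infty=0)=\cC(W)$.

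Next I would upgrade this to the fast decay $Z_s < 2^{-n^\beta}/n$. For any $\eta>0$, basic polarization gives a depth $t_0$ and an event $\cE$ with $\Pr(\cE)\geq \cC(W)-\eta$ on which $Z_{t_0}\leq \zeta$ for a prescribed small constant $\zeta$. Conditioning on $\cE$, I would track the remaining $s-t_0$ steps. Writing $Y_t=-\log_2 Z_t$, each plus step doubles $Y_t$ while each minus step decreases it by at most $1$. By a Chernoff bound, with probability tending to $1$ at least $(\tfrac12-\delta)(s-t_0)$ of the remaining steps are plus steps, for any $\delta>0$. Choosing $\zeta$ small enough (so $Y_{t_0}$ is large) and combining the double-exponential amplification from these plus steps with the at most linear damage from minus steps, one obtains $Y_s \geq 2^{s(1/2-\delta')}$ on a sub-event of probability $\geq \cC(W)-2\eta$. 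Taking $\delta'$ so that $1/2-\delta'>\beta$ absorbs the additional $\log n$ needed for the factor $1/n$ in \eqref{good-def}, which completes the argument since $\eta>0$ was arbitrary.

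The main obstacle is the \emph{non-monotone} interaction between plus and minus steps: a minus step can roughly double $Z_t$, and if a long run of minus steps precedes the plus steps, the squaring may act on a much larger value than $\zeta$. Consequently, a naive deterministic bound on the worst ordering is too weak. Resolving this requires two ingredients that I would treat carefully: (i) choosing $\zeta$ small enough that, in log-scale, $Y_{t_0}$ dominates the total number of minus steps that could occur before any given plus step, and (ii) using concentration (Chernoff/Azuma) simultaneously on the \emph{count} and the \emph{position} of plus steps among the last $s-t_0$ indices, so that with overwhelming probability the squaring is applied at a stage where $Y_t$ has already grown beyond the cumulative minus-step debt. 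These two pieces together convert the $a.s.$ qualitative convergence of Stage~1 into the quantitative rate required for $\beta<1/2$.
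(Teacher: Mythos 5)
First, a framing point: the paper does not prove \Tref{polar_thm1} at all --- it is quoted from \cite{Arikan} and \cite{AT} --- so the only meaningful comparison is with those references, whose two-stage architecture (martingale polarization, then a rate-of-polarization bootstrap) your sketch correctly reproduces. One minor error in Stage~1: there is no ``strict gap between $I(W^-)+I(W^+)$ and $2I(W)$''; these are equal by the chain rule, which is precisely why $I_t$ is a martingale. The quantity bounded away from zero when $I(W)$ is bounded away from $\{0,1\}$ is $I(W^+)-I(W^-)$; equivalently, one argues on the $Z$-process, where a plus step gives $Z_{t+1}-Z_t=-Z_t(1-Z_t)$, so a.s.\ convergence of the supermartingale $Z_t$ forces $Z_\infty\in\{0,1\}$. (You should also note the matching upper bound $\limsup_n|\cG_n(W,\beta)|/n\le \cC(W)$, which follows in one line from $\sum_i I(W_n^{(i)})=nI(W)$; your sketch only addresses the lower bound.)

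The substantive gap is in Stage~2: your ingredient (i) cannot work as stated. After the fixed depth $t_0$ you hold only a \emph{constant} head start $Y_{t_0}=-\log_2\zeta$, whereas the cumulative number of minus steps preceding the $j$-th plus step concentrates around $j$ and hence grows linearly in $s$; no fixed $\zeta$ dominates a quantity growing with $s$. Nor can you take $\zeta=\zeta(s)\to 0$, because martingale convergence comes with no rate, so the corresponding depth $t_0(\zeta)$ is uncontrolled and need not be $o(s)$. Two known repairs exist. (a) The argument of \cite{AT} inserts an intermediate bootstrap: on the event $\sup_{t\ge t_0}Z_t\le\zeta$, the bound $Z_{t+1}\le 2Z_t^{1+B_{t+1}}$ plus the law of large numbers on the fraction of plus steps gives a strictly negative per-step drift of $\log_2 Z_t$, hence $Z_t\le 2^{-ct}$, i.e.\ $Y_t\ge ct$; only then is the doubling argument applied, now to $\log_2 Y_t$, where a minus step costs merely $O(1/Y_t)=O(1/t)$ and the total damage is negligible. (b) Alternatively, observe that the debt is geometrically discounted: letting $D_i$ be the number of minus steps between the $(i-1)$-th and $i$-th plus steps, unrolling $Y^{(i)}\ge 2\bigl(Y^{(i-1)}-D_i\bigr)$ yields $Y^{(j)}\ge 2^j\bigl(Y_{t_0}-\sum_{i\le j}2^{-(i-1)}D_i\bigr)$; since the $D_i$ are i.i.d.\ geometric with mean $1$, the discounted sum has expectation $2$, so a constant $Y_{t_0}$ does dominate it except with probability $O(1/Y_{t_0})$. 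Either device, combined with your Chernoff bound on the number of plus steps (which comfortably absorbs the extra $\log_2 n$ from the factor $1/n$ in \eq{good-def}), closes the argument; without one of them the passage from $Z_{t_0}\le\zeta$ to $Z_s\le 2^{-n^\beta}/n$ is not justified.
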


\Tref{polar_thm1} readily leads to a construction of capacity-achieving \emph{polar codes}. The idea is to transmit the information bits over the good bit-channels while freezing the input to the other bit-channels to a priori known values, say zeros. The decoder for this constructed code is the successive cancellation decoder of Arikan~\cite{Arikan}. This decoder will be described in more details later in this section. The key property of the encoder-decoder pair of polar codes is summarized in the following theorem. This theorem is (the second part of) 
Proposition\,2 of Arikan~\cite{Arikan}.
\begin{theorem}
\label{polar_thm2}
Let $W$ be a BSM channel and let $k = \left|\cG_n(W,\beta)\right|$.
Suppose that a message $\bU$ is chosen 
uniformly at random from $\{0,1\}^k$,
encoded using the polar encoder, and transmitted
over $W$. Then the probability that the channel
output is not decoded to $\bU$ under successive
cancellation decoding satisfies
$$
P \bigl\{\hU \ne \bU \bigr\} 
\, \leq \,
\sum_{i \in \cG_n(W,\beta)} \!Z(W^{(i)}_n) \leq 2^{-n^{\beta}}.
%R \ \le \ 2^{-n^{\beta}} 
%\vspace{-.50ex}
$$
\end{theorem}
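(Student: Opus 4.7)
The plan is to invoke a standard union-bound decomposition of the successive cancellation (SC) decoder's error event into per-bit error events, then control each per-bit event by the corresponding bit-channel's Bhattacharyya parameter, and finally apply the definition of $\cG_n(W,\beta)$ together with $k \le n$ to conclude.

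First, I would decompose the decoding error event. Because the SC decoder decides $\widehat{U}_i$ by treating $\widehat{U}_1^{i-1}$ as if it were the true prefix $U_1^{i-1}$, the ``first error'' analysis applies: the event $\{\hU \ne \bU\}$ is contained in the union over $i \in [n]$ of the events $\cE_i = \{U_1^{i-1} \text{ correct, but } \widehat{U}_i \ne U_i\}$. For indices $i \notin \cG_n(W,\beta)$, the bit $U_i$ is frozen to a value known to the decoder, so $\cE_i = \emptyset$. Hence the union reduces to indices $i \in \cG_n(W,\beta)$, and the union bound gives
\[
P\{\hU \ne \bU\} \,\le\, \sum_{i \in \cG_n(W,\beta)} \Pr(\cE_i).
\]

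Next, I would bound each $\Pr(\cE_i)$ by the Bhattacharyya parameter $Z(W_n^{(i)})$. On $\cE_i$ the decoder sees the true pair $(y_1^n, u_1^{i-1})$ and makes a maximum-likelihood decision about $u_i$ based on the bit-channel $W_n^{(i)}$. Since $U_i$ is uniform on $\{0,1\}$, $W_n^{(i)}$ is a binary-input channel and its ML decision error probability admits the classical Bhattacharyya upper bound $\Pr(\cE_i) \le Z(W_n^{(i)})$. This is the standard one-line calculation that splits the decision region and upper-bounds each indicator by a geometric-mean of likelihood ratios. Combining with the union bound yields the first inequality in the theorem.

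For the second inequality, I would simply use the defining property of $\cG_n(W,\beta)$, namely that $Z(W_n^{(i)}) < 2^{-n^\beta}/n$ for every $i$ in that set, together with $|\cG_n(W,\beta)| \le n$; this gives
\[
\sum_{i \in \cG_n(W,\beta)} Z(W_n^{(i)}) \,<\, n \cdot \frac{2^{-n^\beta}}{n} \,=\, 2^{-n^\beta}.
\]

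The only non-routine step is the per-bit bound $\Pr(\cE_i) \le Z(W_n^{(i)})$; everything else is bookkeeping via the union bound and the definition of the good-index set. The subtlety there is justifying that, conditioned on the past being correct, the random vector $(Y_1^n, U_1^{i-1})$ is distributed exactly as the output of the synthesized bit-channel $W_n^{(i)}$ given input $U_i$, which is precisely the content of equation \eqref{Wi-def} combined with the uniform-input assumption. Once that identification is made, the Bhattacharyya bound is standard and the argument closes.
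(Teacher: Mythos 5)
Your argument is correct. The paper does not actually supply a proof of this theorem---it attributes it to (the second part of) Proposition~2 of Arikan's original paper---and your reconstruction is precisely the standard proof of that proposition: the first-error decomposition of $\{\hU \ne \bU\}$ into the events $\cE_i$ (empty for frozen indices), the union bound, the per-bit Bhattacharyya bound $\Pr(\cE_i) \le Z(W_n^{(i)})$ for the genie-aided ML decision on the synthesized channel $W_n^{(i)}$, and finally the factor $2^{-n^\beta}/n$ in the definition \eqref{good-def} of $\cG_n(W,\beta)$ absorbing the $|\cG_n(W,\beta)| \le n$ terms to give the last inequality.
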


Arikan proposed a low-complex implementation of the successive cancellation (SC) decoder of polar codes \cite{Arikan}. Let $n=2^s$ and suppose that $u^n_1$ be the vector that is multiplied by $\Gs$ and then transmitted over independent copies of $W$. Let $y^n_1$ denote the received word. For $i=1,2,\dots, n$, the decoder computes the likelihood ratio (LR) $L^{(i)}_n$ of $u_i$, given the channel outputs $y^n_1$ and previously decoded $\hat{u}^{i-1}_1$. 
$$
L^{(i)}_n(y^n_1,\hat{u}^{i-1}_1) = \frac{W^{(i)}_n (y^{n}_1,\hat{u}^{i-1}_1 | u_i = 0) }{W^{(i)}_n (y^{n}_1,\hat{u}^{i-1}_1 | u_i = 1)}
$$  
The likelihood functions $L^{(i)}_n$ can be computed recursively as follows. Let $u^j_{1,o}$ and $u^j_{1,e}$ denote the subvectors with odd and even indices, respectively. A straightforward calculation using the bit-channel recursion formulas for $n \geq 1$, gives the following recursive formulas:
\begin{align}
\label{LLR_odd}
&L^{(2i-1)}_n (y^n_1,\hatu^{2i-2}_1)\\
\notag
&= \frac{1+L^{(i)}_{n/2}(y^{n/2}_1,\hatu^{2i-2}_{1,e} \oplus \hatu^{2i-2}_{1,o})
L^{(i)}_{n/2}(y^{n}_{n/2+1},\hatu^{2i-2}_{1,e})}{L^{(i)}_{n/2}(y^{n/2}_1,\hatu^{2i-2}_{1,e} \oplus \hatu^{2i-2}_{1,o})
+L^{(i)}_{n/2}(y^{n}_{n/2+1},\hatu^{2i-2}_{1,e})}\\
\label{LLR_even}
&L^{(2i)}_n(y^n_1,\hatu^{2i-1}_1)\\
\notag
 &= L^{(i)}_{n/2}(y^{n/2}_1,\hatu^{2i-2}_{1,e} \oplus \hatu^{2i-2}_{1,o})^{1-\hatu_{2i-1}}
L^{(i)}_{n/2}(y^n_{n/2+1},\hatu^{2i-2}_{1,e})
\end{align}
If $W^{(i)}_n$ is not a good bit-channel, then the decoder knows that $i$-th bit $u_i$ is set to zero and  therefore, $\hat{u_i} = u_i = 0$. Otherwise, it makes the hard decision based on $L_n^{(i)}$. The total number of LRs that need to be calculated is $n(1+\log n)$. Arikan also proposed an exact order using an $n \times (1+\log n)$ trellis in which the LR calculations are carried out. 

\section{Proposed RS-polar concatenated code}
\label{sec:three}

In this section, we describe our proposed construction for concatenating polar codes with outer block codes. We consider the case when the outer code is a RS code. We establish bounds on the error correction performance and discuss the rate-adaptive construction. 

\subsection{Interleaved concatenated RS-polar codes}

The proposed scheme for concatenating polar codes with outer RS codes is illustrated in Fig. \ref{fig:scheme}.
The symbols generated by a certain number of RS codewords are interleaved and converted into binary streams using a fixed basis to provide the input of the polar encoders. In a specific construction, the bits corresponding to the first symbols of all RS codewords are encoded into one polar codeword. Similarly, the information bits of the second polar codeword constitutes of all bit corresponding to the second symbol coordinates of all outer RS codewords. Hence, polar encoding can be done in parallel, which reduces the encoding latency.

 \begin{figure}%[h]
\centering
\includegraphics[width=\linewidth]{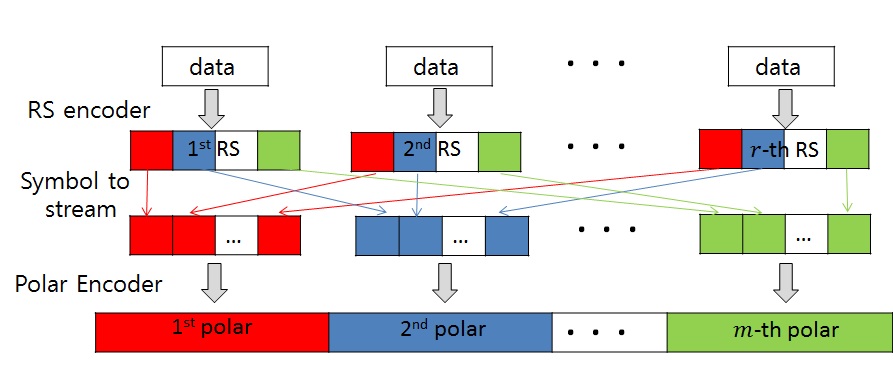}\\%{LM2SMLM.eps}\\
\caption{Proposed concatenation scheme of Polar codes with outer interleaved block codes.
\label{fig:scheme}}
\vspace{-0.5cm}
\end{figure}

More precisely, let $n$ and $m$ denote the lengths of the inner polar code and outer RS code, respectively. Let $k$ denote the number of input bits to each polar encoder i.e. the rate of each inner polar code is $R_I = k/n$. The symbols of the outer RS codes are drawn from the finite field $\F_{2^t}$, with cardinality $2^t$. It is assumed that $k$ is divisible by $t$. Hence, in the proposed scheme, the number of outer RS codes is $r=k/t$ and the number of inner polar codes is $m$. The rates of the outer RS codes will be specified later. Assume that $r$ RS codewords of length $m$ over $\F_{2^t}$ are given. For $i=1,2,\dots,r$, let $(c_{i,1},c_{i,2},\dots,c_{i,m})$ denote the $i$-th codeword.  For $j=1,2,\dots,m$, the $j$-th polar codeword is the output of the polar encoder of rate $k/n$ with the input $\bigl(\mathcal{I}(c_{1,j}),\mathcal{I}(c_{2,j}),\dots,\mathcal{I}(c_{r,j})\bigr)$, where $\mathcal{I}(c)$ maps a symbol $c \in \F_{2^t}$ to its binary image with $t$ bits. Hence, the total length of the concatenated codeword is $N=nm$.
 The interleaver proposed here between the inner and outer codes can be viewed as a structured block interleaver. Other polynomial interleavers may be considered for further improvements. The interleaver plays an important role in this proposed scheme, as it helps to eliminate the need of large field size for the outer RS code in the previous scheme of \cite{BJE}. Since $t=k/r$, $t$ can be fixed as $k$ grows by increasing the number of RS codewords.

\subsection{Rate-adaptive construction of RS-polar concatenated code}
\label{rate-adaptive}
Due to the polarization phenomenon of polar codes, not all bit-channels chosen to carry the information bits have the same reliability. An outer RS code is not actually needed for the strongest bit-channels of the inner polar code, since the corresponding information bits are already well-protected. Our construction guarantees that all symbols of same RS code see the same set of bit-channels, which are different from one RS code to another. In other words, each RS codeword is re-encoded by same bit-channel indices across the different polar codewords. Therefore, the rate of each RS code can be properly assigned to protect the polarized bit-channels in such a way that all the information bits are almost equally protected. Suppose that the probability of error for each of the input bits to the polar code is given. Let $k$ be the information block length of the inner polar code. For $i = 1,2,\dots,k$, let $P_{i}$ be the probability that an error occurs when decoding the $i$-th information bit with the SC decoder, assuming that all the first $i-1$ information bits were successfully recovered. Suppose that the outer RS code is over $\F_{2^t}$, for some integer $t$. Then the total number of RS codes is $k/t$. Then, the first $t$ information bits of each polar codeword form one symbol for the first RS code, the next $t$ information bits form one symbol for the second RS code, etc. If we consider one of the inner polar codes, the probability that the first RS symbol has an error at the output of the SC polar decoder is given by $1-(1-P_1)(1-P_2)\dots(1-P_t)$. In general, for $i=1,2,\dots,k/t$, the probability that a symbol of the $i$-th RS codeword is in error assuming that symbols of preceding RS codes were decoded successfully, is given by $Q_i = 1-(1-P_{it-t+1})(1-P_{it-t+2})\dots(1-P_{it})$.

The design criterion is as follows. Let $\cE$ be the target frame error probability (FEP) of the concatenated code. Then for $i=1,2,..,k/t$, let $\tau_i$ be the smallest positive integer such that
\be{redundancy_assign}
{m \choose \tau_i+1} Q_i^{\tau_i+1} < \frac{t\cE}{k}.
\ee
Then, the proposed rate-adaptive (RA) concatenation scheme deploys a $\tau_i$-error correcting RS code for the $i$-th outer RS code. The following lemma shows that the FEP $\cE$ is guaranteed.
\begin{lemma}
Suppose that the $i$-th outer RS code is a $(m,m-2\tau_i)$ code, for $i=1,2,\dots,k/t$, where $\tau_i$ is determined by \eq{redundancy_assign}. Then, the total frame error probability for the RS-polar concatenated code is less than $\cE$.
\end{lemma}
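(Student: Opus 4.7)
The plan is to bound the overall frame error probability by a nested union bound: first over the $k/t$ outer RS codewords, and within each codeword over the patterns of symbol errors. The key observation is that, thanks to the interleaved structure and the successive concatenated decoding order, the outer codes can be analyzed one at a time. If RS codewords $1,\ldots,i-1$ have been correctly recovered, then their symbols are fed back as correct bits to the SC decoders of all $m$ polar codewords, so that when decoding the bits in positions $(i-1)t+1,\ldots,it$ of any particular polar codeword, the conditioning hypothesis underlying the definition of $P_j$ (that all preceding information bits are correct) is satisfied. Chaining the $t$ bits that make up one RS symbol then yields that this symbol is in error with probability at most $Q_i = 1-\prod_{j=(i-1)t+1}^{it}(1-P_j)$, which is exactly the quantity defined just above \eq{redundancy_assign}.

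Next I would argue that, since the $m$ polar codewords are transmitted over independent copies of $W$ and each is decoded by its own SC decoder using only its own channel outputs together with the common feedback from the prior RS decoders, the $m$ symbols of the $i$-th RS codeword are in error as independent Bernoulli trials with parameter at most $Q_i$. A $\tau_i$-error-correcting RS code fails only if at least $\tau_i+1$ of its $m$ symbols are wrong, so a union bound over the ${m \choose \tau_i+1}$ possible sets of erroneous positions gives
\[
\Pr\{\text{RS}_i\text{ fails}\mid\text{RS}_1,\ldots,\text{RS}_{i-1}\text{ correct}\} \leq {m \choose \tau_i+1} Q_i^{\tau_i+1} < \frac{t\cE}{k},
\]
where the last inequality is precisely the design condition \eq{redundancy_assign}. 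A final union bound over $i=1,2,\ldots,k/t$ shows that the total frame error probability is strictly below $(k/t)\cdot(t\cE/k) = \cE$.

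The main obstacle is justifying the independence claim in the second step. Within a single polar codeword the bit-level errors are strongly correlated through the SC decoding tree, but this correlation is already absorbed into the definition of $P_j$, and hence of $Q_i$, which are conditional probabilities on all previous information bits being correct. Across the $m$ different polar codewords independence comes from the fact that the $nm$ underlying channel uses are independent and that each SC decoder operates only on its own block of $n$ outputs together with the deterministic (once fixed) feedback from the outer decoders. The cleanest way to formalize this is a short induction on $i$ whose inductive hypothesis is that, conditioned on RS codes $1,\ldots,i-1$ being decoded correctly, the states of the $m$ SC decoders at the point of producing their $i$-th symbols remain mutually independent; this reduces the lemma to the binomial-tail computation above.
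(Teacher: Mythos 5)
Your proposal is correct and follows essentially the same route as the paper's proof: a union bound over the $\binom{m}{\tau_i+1}$ possible locations of $\tau_i+1$ erroneous symbols gives the per-RS-code failure bound $\binom{m}{\tau_i+1}Q_i^{\tau_i+1} < t\cE/k$, and a final union bound over the $k/t$ outer codes yields $\cE$. The paper states this more tersely (it does not spell out the independence of symbol errors across the $m$ polar codewords or the genie-aided conditioning behind $P_j$ and $Q_i$), so your additional justification of those points is a refinement rather than a departure.
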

\begin{proof}
The $i$-th RS decoder is successful if the number of its erroneous symbols is at most $\tau_i$. If some $\tau_i+1$ of symbols out of the total $m$ symbols are in error, then the RS decoder fails. The probability of this event is bounded by
${m \choose \tau_i+1} Q_i^{\tau_i+1}$. This is less than $\frac{t\cE}{k}$ by \eq{redundancy_assign}. Observe that the total FEP of the concatenated code is upper bounded by the union bound on error probability of outer RS codes. Therefore, the probability of frame error is less than $\frac{k}{t} \times \frac{t\cE}{k} = \cE$. 
\end{proof}

The rate-adaptive design criterion, described above, requires knowledge of the individual bit-channel error probability. While it can be precisely calculated for erasure channels, we take a numerical approach to solve this problem for an arbitrary channel, e.g. additive white Gaussian noise (AWGN) channel: Assume that all previous input bits $1,2,\dots,i-1$ are provided to the SC decoder by a genie when the $i$-th bit is decoded. For bit $i = 1,2\dots,n$, the decoder is run for a sufficiently large number of independent inputs to get an estimate of the probability of the event that the $i$-th bit is not successfully decoded, given that the bits indexed by $1,2,\dots,i-1$ were successfully decoded. An alternative way is to use the method introduced in \cite{TV} which provides tight upper and lower bounds on the bit-channel error probability.

%=======================================================================%
%                                                                       %
%       4. Decoding methods for the RS-polar Concatenated Scheme        %
%                                                                       %
%=======================================================================%
\section{Proposed decoding methods for the RS-polar concatenated code}
\label{sec:four}

In this section, we first provide an analysis of asymptotic decoding complexity of the proposed concatenated code. Then several decoding techniques are proposed to improve the finite length performance while the order of decoding complexity is kept the same. 

\subsection{Decoding complexity}

To compute the decoding complexity of the concatenated RS-polar code, we take into account the decoding complexity of both the inner polar code and the outer RS code.

The decoding complexity of the polar code using successive cancellation decoding is given by $O(n\log n)$, with $n$ being the length of the polar code \cite{Arikan}. Since there are $m$ inner polar codes in the proposed concatenated scheme, the total complexity of decoding the inner polar codes is $O(nm\log n)$, which is bounded by $O(N \log N)$.

A well-known hard-decision bounded-distance RS decoding method is the Berlekamp-Massey (BM) algorithm. The BM algorithm is a syndrome-based method which finds the error locations and error magnitudes separately. The decoding complexity is known to be $O(m^2)$ operations over the field $\F_{2^t}$. One main advantage of the proposed concatenated code is that the RS code alphabet size is in the same order of $m$, whereas it is exponential in terms of $n$ in the existing scheme of \cite{BJE}. Gao proposed a syndrome-less RS decoding algorithm  that uses fast Fourier transform and computes the message symbols directly without computing error locations or error magnitudes \cite{G}. For RS codes over arbitrary fields, the asymptotic complexity of syndrome-less decoding based on multiplicative FFT techniques was shown to be $O(m \log^2 m \log\log m)$. Hence by deploying syndrome-less RS decoding, the total complexity of decoding the outer RS codes can be at most $O(nm \log^2 m \log\log m)$ which is bounded by $O(N \log^2 N  \log\log N )$.

Therefore, the total decoding complexity of the proposed concatenated RS-polar code can be asymptotically  bounded by $O(N \log^2 N  \log\log N )$.

\subsection{Successive cancellation RS-polar decoding}

The main drawback of the successive cancellation decoding of polar codes is that once an error occurs, it may propagate through the whole polar codeword. Since the information block of the polar code is protected with an outer RS code, errors in the decoded bits can be corrected using the outer code while the SC decoder evolves. This can potentially mitigate the error propagation problem and consequently results in improvement in the FEP of the proposed scheme at finite block lengths. Hence, we propose the successive cancellation decoding algorithm for our serially concatenated RS-polar code as follows. Let $k$ be the information block length for the inner polar code and $2^t$ be the size of the alphabet for the outer RS code.  Initialize the algorithm with $j=1$. Then
\begin{itemize}
\item
For each of the $m$ polar codewords, decode the $j$-th information sub-block of length $t$, i.e. the information bits indexed by $(j-1)t+1,(j-1)t+2,\dots,jt$. These operations can be done in parallel for the inner polar codewords.
\item
Pass the $mt$ hard-decisioned output bits as $m$ symbols over $\F_{2^t}$ to form the $j$-th RS word, and decode it with the bounded-distance RS decoder.
\item
Update the decoded bits of all $m$ polar codewords using the RS decoder output, and use them to continue SC decoding.
\item
Increase $j$ by one and repeat, while $j < k/t$.
\end{itemize}

Not only the frame error probability of the RS-polar concatenated code can be potentially improved, but also the decoding latency can be significantly reduced, thanks to the parallel processing in the first step of the decoding algorithm explained above.

\subsection{SC Generalized Minimum Distance decoding of RS-polar code}

Generalized minimum distance (GMD) decoding was introduced by Forney in \cite{F}, where the soft information is used with  algebraic bounded-distance decoding to generate a list of codewords. In order to further improve the performance, while keeping the decoding complexity order, we propose SC decoding of the concatenated code with GMD decoding of the outer code. In the concatenated code, the likelihood of each symbol can be computed given the LLRs of the corresponding bits generated by the SC decoder of the inner polar code. The $m$ symbols of each RS word are sorted with respect to their likelihoods.  The $\alpha$ least likely symbols are declared as erasures, where the case of $\alpha=0$ is the same as a regular RS decoding. In conventional GMD decoding, errors and erasures decoding of RS codes is run for $\alpha = \{0,2,4, \dots, d-1\}$, where $d$ is the minimum distance of the RS code. This gives a list of size at most $(d+1)/2$ candidate RS codewords at the output of the decoder. The decoder picks the closest one to the received word, which is then passed to the polar decoders. A naive way of implementing the GMD decoding increases the complexity by a factor of $(d+1)/2$. However, Koetter derived a fast GMD decoding algorithm which removes this factor \cite{K}, and hence GMD can be deployed in decoding our RS-polar code while preserving our decoding complexity bound. Since the SC decoder actually computes the LLR's of the bits in each symbol, the likelihood of each symbol that is passed to RS decoder can be computed. The symbol likelihoods from the different polar SC decoders are used for GMD decoding of each RS code.

\subsection{Near ML SC-GMD decoding }

In the previous subsection, at the last step of GMD RS decoding, the candidate in the generated list of codewords that is the closest one to the received word is picked. Here, we further improve the performance by actually picking the most likely codeword based on soft information from the polar decoder. The first approach is to approximate the symbol probabilities using the bit LLR's generated by the polar decoder. This is not precise, since the bit LLR's in each symbol are not independent. We pick the best codeword from the list generated by GMD decoder based on its estimated probability given by the product of estimated symbol probabilities. We call this approach \emph{SC-GMD-approximate ML or SC-GMD-AML decoding}. In the second approach, the SC decoder of polar code is modified to output the soft information for all the possible symbols. For a symbol constituting of $t$ bits, the LLR of each bit depends on the previous bits in the symbol. In order to compute the exact symbol probabilities, the SC polar decoder computes the probabilities of all the $2^t$ symbols by traversing all the possible $2^t$ paths, for each consecutive $t$ bits. This increases the complexity of polar decoder by a constant factor of $\sum_{i=0}^{t-1}2^i/t$. This enables near ML decoding of outer RS code on top of the SC-GMD decoding of RS-polar code, wherein the GMD decoder picks the most likely candidate from the generated list. We call this method \emph{SC-GMD-ML} decoding. Also, since the SC decoder recursively calculates the LLRs, the LLRs computed along each path are saved so when the correct symbol is picked by the outer decoder, the LLRs computed along the corresponding path are picked to proceed with SC decoding for the next $t$ bits.

%=======================================================================%
%                                                                       %
%4. Theoretical performance limits                                      %
%                                                                       %
%=======================================================================%
\section{Theoretical performance limits}
\label{sec:five}

In this section, upperbounds and lowerbounds on the error correction capability of the proposed concatenated code is derived. Also, it is shown how to set the parameters of the code in order to get the upperbound $2^{-N^{1-\epsilon}}$, for any $\epsilon > 0$, where $N$ is the length of the concatenated code, on the probability of frame error. Furthermore, the performance of the RS-polar concatenated code and Arikan's polar codes is compared assuming transmission over channels with erasure bursts. 

\subsection{Asymptotic analysis of error correction performance}

In our construction, assume all outer RS codes have the same rate $R_o$.
In \Lref{lemma1}, it is shown that the error probability of the concatenated code is bounded by $2^{-(n^{0.5-\eps}(1-R_o)/2 - 1)m}$. Then in \Tref{theorem1}, we prove that $m$, $n$ and $R_o$ can be set in such a way that the error probability of the concatenated code is bounded by $2^{-N^{1-\eps}}$, for any $\eps > 0$, asymptotically, while the concatenated code is still capacity achieving. This significantly improves the error-decay rate compared to polar codes with the same length $N$.

\begin{lemma}
\label{lemma1}
In the proposed RS-polar concatenated scheme, for any $\eps > 0$ and large enough $n$, the probability of frame error is upper bounded by $ 2^{(\frac{-n^{0.5-\eps}(1-R_o)}{2} - 1)m}$, where $n$ and $m$ are the lengths of the inner and outer codes, respectively, and $R_o$ is the rate of outer code.
\end{lemma}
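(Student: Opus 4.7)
The plan is to bound the failure probability of each outer RS code by a binomial tail bound, using Theorem~\ref{polar_thm2} to control the per-symbol error probability at the output of each inner polar successive cancellation decoder, and then union bound over the $r = k/t$ outer codewords. I would fix some $\beta$ with $1/2-\eps < \beta < 1/2$, say $\beta = 1/2 - \eps/2$, and instantiate the polar construction at this $\beta$, so that Arikan's exponent bound applies in the form $2^{-n^{\beta}}$.

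First, by Theorem~\ref{polar_thm2}, the total probability that a given inner polar SC decoder produces any wrong information bit is at most $\sum_{i \in \cG_n(W,\beta)} Z(W_n^{(i)}) \le 2^{-n^{\beta}}$. Since the event that a specific RS symbol (a specific block of $t$ information bits) is decoded in error is contained in this event, the per-symbol error probability $p$ is at most $2^{-n^{\beta}}$. Next I would use the structure of the interleaved construction: the $m$ symbols of any fixed outer RS codeword originate from $m$ distinct inner polar codewords that are transmitted over mutually disjoint blocks of $n$ channel uses, so these $m$ symbol-error events are independent, each with marginal probability at most $p$.

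With $\tau = \lfloor m(1-R_o)/2 \rfloor$, the bounded-distance outer decoder fails only if at least $\tau+1$ out of $m$ symbols are wrong, giving
\[
P(\text{outer RS fails}) \;\le\; \binom{m}{\tau+1} p^{\tau+1} \;\le\; 2^{\,m - (\tau+1)\,n^{\beta}}.
\]
Applying a union bound over the $r \le n$ outer RS codewords only adds a $\log_2 n$ term in the exponent. Using $\tau+1 \ge m(1-R_o)/2$ and the strict inequality $\beta > 1/2 - \eps$, for $n$ large enough
\[
m + \log_2 n - \tfrac{m(1-R_o)}{2}\, n^{\beta} \;\le\; -m - \tfrac{m(1-R_o)}{2}\, n^{1/2-\eps},
\]
because $n^{\beta} - n^{1/2-\eps}$ grows like $n^{1/2-\eps/2}$ and hence asymptotically dominates the additive $O(1) + O((\log n)/m)$ slack needed. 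This yields the claimed upper bound on the frame error probability.

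The main obstacle is precisely this last step: matching the $2^m$ combinatorial factor from $\binom{m}{\tau+1}$ against the additional $-m$ in the target exponent. The trick is to buy the extra room by choosing $\beta$ strictly larger than $1/2-\eps$, so that Arikan's actual decay exponent $n^{\beta}$ exceeds the target $n^{1/2-\eps}$ by a factor of $n^{\eps/2}$ that grows without bound; this polynomial surplus in the exponent comfortably absorbs the $2^m$ combinatorial factor, the polynomial union-bound factor from summing over the $r$ outer codewords, and the small rounding gap between $\tau+1$ and $m(1-R_o)/2$.
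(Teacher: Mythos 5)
Your proof is correct and follows essentially the same route as the paper's: a binomial tail bound $\binom{m}{\tau+1}p^{\tau+1}$ on the outer decoder failure, Arikan's rate-of-polarization bound (Theorem~\ref{polar_thm2}) for the inner error probability $p$, and the crude estimate $\binom{m}{\tau+1}<2^m$. Two bookkeeping differences are worth noting. First, the paper does not union-bound over the $r$ outer codewords: since all outer codes share the same correction radius $\tau$ and each polar codeword contributes exactly one symbol to each RS word, the frame can fail only if more than $\tau$ of the $m$ inner polar decoders fail, and a single binomial tail on that event suffices; your per-RS-code analysis plus union bound costs an extra $\log_2 n$ in the exponent, which you correctly absorb. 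Second, and more substantively, the paper instantiates Arikan's bound at $\beta=0.5-\eps$ directly, so the $2^m$ factor survives and its final bound is $2^{-(n^{0.5-\eps}(1-R_o)/2\,-\,1)m}$, i.e.\ with a $+m$ in the exponent, whereas the lemma as printed claims the stronger $2^{(-n^{0.5-\eps}(1-R_o)/2\,-\,1)m}$ with a $-m$. Your device of taking $\beta=0.5-\eps/2$ and spending the surplus $n^{\beta}-n^{0.5-\eps}$ actually delivers the stronger printed form, at the price of the implicit assumption that $(1-R_o)\,n^{0.5-\eps}$ stays bounded away from zero --- harmless, since the bound is vacuous otherwise and the condition holds for the parameters chosen in Theorem~\ref{theorem1}. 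Since Theorem~\ref{theorem1} only uses the weaker $-(A-1)m$ form, the sign discrepancy is a typo in the lemma statement rather than a gap in either argument.
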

\begin{proof}
Assuming a bounded-distance RS decoder, the error correction capability of RS codes is $\tau = \left\lfloor (1-R_o)m/2\right\rfloor$. The decoder for the whole concatenated scheme fails if more than $\tau$ of the inner polar codewords are in error. Let $P_e$ denote the probability of block error of the inner polar decoder and $\mathcal{E}$ be the frame error probability (FEP) of the concatenated code. $\cE$ can be derived from $P_e$ as follows:
\be{FER}
\cE = \sum^{m}_{i=\tau+1} {m \choose i} P_e^i (1-P_e)^{m-i} \leq {m \choose \tau+1} P_e^{\tau+1}
\ee 
The upper bound on the probability of error holds by the following simple observation. If some $\tau+1$ of the polar codewords are in error, then a decoding error occurs. The bound is derived by counting all the possibilities for the location of $\tau+1$ erroneous symbols. Since some error incidents are counted multiple times, we get an upper bound in \eq{FER}. By plugging in the result of \Tref{polar_thm2} on the probability of error $P_e$ of polar codes into the upper bound in \eq{FER} we get
\begin{equation}
\begin{split}
\label{FER_est2}
\cE &\leq {m \choose \tau+1} 2^{-n^{0.5 - \eps}(\tau+1)}\\
& \leq {m \choose \tau+1} 2^{-n^{0.5 - \eps}m(1-R_o)/2} \\
& < 2^m . 2^{-n^{0.5 - \eps}m(1-R_o)/2} \\
& = 2^{-(n^{0.5 - \eps}(1-R_o)/2 - 1)m}
\end{split}
\end{equation}
\end{proof}
Notice that the bound in \Lref{lemma1} can be tightened using the Stirling's approximation. Instead of simply bounding ${m \choose \tau+1}$ by $2^m$, the more precise Stirling's approximation of $2^{m (H((1-Ro)/2)  - 1/2\log m + O(1))}$, where $H(x)$ is the binary entropy function, can be used. However, this does not affect the result of the next theorem in the asymptotic sense.  
\begin{theorem}
\label{theorem1}
For any $\eps > 0$, the lengths of the inner polar code and outer RS code, and the rate of outer RS code can be set
such that the frame error probability of the concatenated code of total length $N$ is asymptotically upper bounded by $2^{-N^{1-\eps}}$, while the scheme is still capacity-achieving.
\end{theorem}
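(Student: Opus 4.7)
The plan is to turn the bound of \Lref{lemma1} into the desired form by choosing the three free parameters, $n$, $m$, and $R_o$, as explicit functions of each other so that (i) the exponent in the Lemma~\ref{lemma1} bound grows at least as fast as $N^{1-\epsilon}$, and (ii) the product rate $R_I\cdot R_o$ converges to $C(W)$. I would parameterize $m=n^{\alpha}$ and $1-R_o=n^{-\gamma}$ for constants $\alpha\ge 1$ and $\gamma>0$ to be fixed later, so that $N=nm=n^{1+\alpha}$.

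First I would substitute these into \Lref{lemma1}. Ignoring the $-1$ additive term (which is dominated for large $n$), the exponent of the frame-error bound becomes, up to constants,
\[
\frac{1}{2}\, n^{0.5-\epsilon'}\cdot n^{-\gamma}\cdot n^{\alpha}
\;=\;\tfrac{1}{2}\,n^{0.5-\epsilon'-\gamma+\alpha},
\]
where $\epsilon'$ is the constant appearing in the $2^{-n^{0.5-\epsilon'}}$ bound of \Tref{polar_thm2}. On the other hand, the target $N^{1-\epsilon}$ expressed in $n$ equals $n^{(1+\alpha)(1-\epsilon)}$. Comparing exponents, I need
\[
0.5-\epsilon'-\gamma+\alpha \;\ge\; (1+\alpha)(1-\epsilon),
\]
which rearranges to $\alpha\,\epsilon \ge 0.5+\epsilon'+\gamma-\epsilon$. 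Hence for any prescribed $\epsilon>0$, I can pick $\epsilon'$ and $\gamma$ positive but arbitrarily small, and then choose $\alpha$ large enough (depending on $\epsilon,\epsilon',\gamma$) so that the inequality is strictly satisfied. This already gives $\cE\le 2^{-N^{1-\epsilon}}$ for all sufficiently large $n$.

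Next I would verify the capacity-achieving property. The overall rate of the concatenation is $R=R_I\cdot R_o$. By \Tref{polar_thm1}, choosing the good-bit-channel set with parameter $\beta<1/2$ makes $R_I=|\cG_n(W,\beta)|/n \to C(W)$ as $n\to\infty$. With $R_o=1-n^{-\gamma}\to 1$, we obtain $R\to C(W)$, so the scheme is capacity-achieving along the chosen parameter sequence. I would also note that the requirement $R_I\cdot n = k$ divisible by $t$ (so that the RS codes fit exactly) can always be accommodated by mild adjustments to the good-bit-channel count and $t=k/r$, without affecting the asymptotic rate.

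The main obstacle, and the only one, is the coupling between the three desiderata: inflating $\alpha$ makes the error exponent easier to satisfy but inflates $N$ (so the target $N^{1-\epsilon}$ also grows), while shrinking $\gamma$ helps the capacity side but not the error side. The arithmetic above shows these are compatible because the extra factor of $m$ in the exponent of \Lref{lemma1} is \emph{multiplicative}, whereas the target exponent $(1+\alpha)(1-\epsilon)$ only grows \emph{linearly} in $\alpha$; choosing $\alpha>(0.5+\epsilon'+\gamma)/\epsilon-1$ therefore closes the gap. Once this balance is formalized, the theorem follows directly from \Lref{lemma1} and \Tref{polar_thm1}.
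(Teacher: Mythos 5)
Your proposal is correct and follows essentially the same route as the paper: both substitute power-law choices of $n$, $m$, and $1-R_o$ into \Lref{lemma1} and verify that the resulting exponent dominates $N^{1-\eps}$ while $R_o\to 1$ and the inner rate tends to capacity. The paper simply fixes one feasible point of your constraint region, namely $n=N^{\eps}$, $m=N^{1-\eps}$ (i.e.\ $\alpha=1/\eps-1$) and $1-R_o=4N^{-\eps(0.5-\eps)}$ (i.e.\ $\gamma=0.5-\eps'$), for which your exponent inequality holds with equality and the constant $4$ absorbs the $-1$ term you set aside.
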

\begin{proof}
The length of the inner polar code $n$, the length of the outer RS code $m$, and the rate of outer RS code $R_o$
can be set as follows:
\begin{equation} \nonumber
n = N^{\eps},\ m = N^{1-\eps}\ \text{, and}\ R_o = 1 - 4N^{-\eps(0.5-\eps)}.
\end{equation}
Substituting $n$, $m$, and $R_o$ into the bound given by \Lref{lemma1}, one gets
\begin{equation}
\mathcal{E} \leq 2^{-(n^{0.5 - \eps}(1-R_o)/2 - 1)m} = 2^{-N^{1-\eps}}
\end{equation}
as the upper bound on the FEP.
With above settings, $R_o \rightarrow 1$, as $N \rightarrow \infty$. Hence, the
 rate of the concatenated polar code also approaches the capacity, since the inner polar code is proven to be capacity-achieving.
\end{proof}

A lower bound on the probability of error can be also derived by analyzing an optimistic case. The optimistic case is the following: when a polar decoder fails, only one sub-block of the decoded data which contributes to one symbol of the RS outer code is in error. Also, these erroneous symbols are distributed equally over the RS codewords i.e. each RS codeword gets an equal number of erroneous symbols. In this case, the system can support up to $\tau r$ errors in polar decoders where $r$ is the number of outer RS codewords and $\tau$ is the error-correction capability of the outer RS code. In fact $r \approx nR_I / \text{log} \,m$, where $R_I$ is the rate of inner polar code and $\text{log}\,m$ is the number of bits in representation of a symbol in the RS codeword. In this optimistic case, the FEP $\cE_L$ is given by
$$
\cE_L = \sum^{m}_{i=r\tau +1} {m \choose i} P_e^i (1-P_e)^{m-i}
$$
Then plug in the result of \cite{AT}, for large enough $n$, and use the approximation discussed in \Lref{lemma1} along with the Stirling's approximation to get
\begin{equation}
\label{FER_bound2}
\begin{split}
\cE_L &\approx {m \choose r\tau+1} 2^{-n^{0.5 - \eps}(r\tau+1)}\\
& \approx {m \choose r\tau+1} 2^{-n^{0.5 - \eps}nmR_I(1-R_o)/2\text{log} m} \\
& \approx 2^{m H(nR_IR_o/m\log m) -n^{1.5 - \eps}mR_I(1-R_o)/2\text{log} m}\\
\end{split}
\end{equation}

\subsection{Analysis of erasure burst correction performance}

An error burst is a common error pattern in storage systems which consists of a contiguous run of erroneous symbols. An erasure burst of length $d$ is a run of $d$ consecutive symbols that are all erased. 

Suppose that a polar code of length $n=2^s$ is constructed for transmission over on a B-DMC $W$, with respect to the set of good bit-channels $\cG_n(W,\beta)$ for a fixed $\beta \,{<}\, \shalf$. Let $u_1^n$ denote the input message, where $u_i$ carries an information bit for any $i \in \cG_n(W,\beta)$ and otherwise, it is frozen to zero. $u_1^n G_n$ is transmitted through $n$ independent copies of $W$. The received sequence is denoted by $y_1^n$. This is the scenario considered in the following two lemmas. 

\begin{lemma}
\label{erasure_bound}
Let $q \leq s$ and $j \leq 2^{s-q}$ be two positive integers. Assume that an erasure burst of length $2^q$ occurs with starting index $(j-1)2^q+1$ and ending index $j2^q$. Then for any $l$ with $0 \leq l \leq 2^q - 1$, the computed likelihood ratio of $u_{l2^{s-q}+1}$ by successive cancellation decoder is $1$.    
\end{lemma}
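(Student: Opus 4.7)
The plan is to reduce the claim to two facts about the SC recursion \eq{LLR_odd}--\eq{LLR_even}: (i) a block-erasure propagation fact, saying that whenever a subtree of the SC recursion sees only erased channel outputs, every LLR it produces equals~$1$; and (ii) a structural fact, saying that $L^{(l 2^{s-q}+1)}_n$ is the output of a balanced binary tree of $s-q$ nested check-node combines whose $2^{s-q}$ leaves are sub-decoder LLRs on the $2^{s-q}$ consecutive non-overlapping windows of length $2^q$ of $y_1^n$. Once these are in hand, the leaf that coincides with the burst is forced to equal $1$, and this value propagates all the way up to the root.

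First I would establish fact~(i) by induction on the block length $n'=2^{q'}$. The base case $q'=0$ is immediate since an erased $y$ yields $L(y)=W(y|0)/W(y|1)=1$. For the inductive step, if every sub-LLR on an all-erased half-block of length $n'/2$ equals~$1$, then \eq{LLR_odd} gives $(1+1\cdot 1)/(1+1)=1$ and \eq{LLR_even} gives $1^{1-\hat u}\cdot 1 = 1$, independently of whatever partial decisions $\hat u$ are fed in from outside the subtree. Hence every LLR the sub-decoder computes inside an entirely erased block equals~$1$.

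For fact~(ii) I would unroll the SC recursion from the top. Since $i-1 = l\hspace{1pt} 2^{s-q}$ has its $s-q$ least significant bits equal to $0$, each of the first $s-q$ reductions falls into the odd-index case \eq{LLR_odd}, which is a pure check-node combine and splits the received sequence into its first and second halves. Iterating $s-q$ times produces a balanced binary tree with $s-q$ levels of check-node combines whose $2^{s-q}$ leaves are $L^{(l+1)}_{2^q}$ evaluated on the consecutive windows $y_{(k-1)2^q+1}^{k 2^q}$, $k=1,\dots,2^{s-q}$. By hypothesis the $j$-th window is entirely erased, so fact~(i) forces that leaf to equal~$1$. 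A one-line computation shows $(1+1\cdot x)/(1+x) = 1$ for any $x$, i.e.\ a check-node combine with one input equal to~$1$ outputs~$1$; an immediate upward induction along the $s-q$ tree levels propagates the value~$1$ from the $j$-th leaf to the root, giving $L^{(l 2^{s-q}+1)}_n = 1$ as required.

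The main obstacle I anticipate is indexing bookkeeping: one must verify carefully from \eq{LLR_odd}--\eq{LLR_even} that the $s-q$ trailing zero bits of $l\hspace{1pt}2^{s-q}$ select minus operations at the \emph{top} $s-q$ levels of the SC recursion rather than at the bottom, and that each minus split at length $2n'$ separates $y_1^{n'}$ from $y_{n'+1}^{2n'}$ in natural (not bit-reversed) order, so that after $s-q$ unrollings the $2^{s-q}$ leaves are aligned with the consecutive $2^q$-length windows. Once this is settled, the bit-reversal $R_n$ inside $G_n = R_n G^{\otimes s}$ is already absorbed into the given recursion, and no $\hat u$-dependence needs to be tracked along the critical path since every combine on that path is of the minus type.
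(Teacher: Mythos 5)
Your proof is correct and follows essentially the same route as the paper's: unroll the odd-index recursion \eq{LLR_odd} for $s-q$ levels to expose $2^{s-q}$ leaf terms $L^{(l+1)}_{2^q}$ on consecutive length-$2^q$ windows, observe that the leaf on the erased window equals $1$, and propagate that $1$ to the root via the identity $(1+1\cdot x)/(1+x)=1$. The only difference is that you explicitly prove by induction (your fact~(i)) that a sub-decoder on an entirely erased block outputs LR $1$ regardless of the fed-in partial decisions, a step the paper simply asserts.
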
 
\begin{proof}
Observe that for any $i \leq n/2$, the likelihood ratio $L_n^{(2i-1)}(y_1^n,\hatu_1^{(2i-1)})$ is $1$ if one of the terms $L^{(i)}_{n/2}(y^{n/2}_1,\hatu^{2i-2}_{1,e} \oplus \hatu^{2i-2}_{1,o})$ or 
$L^{(i)}_{n/2}(y^{n}_{n/2+1},\hatu^{2i-2}_{1,e})$ is $1$. This is clear by the recursive formula for LR calculation in \eq{LLR_odd}. Calculation of the LR of $u_{l2^{s-q}+1}$ reduces to calculation of LRs of the form $L^{(l+1)}_{2^q}$, after $s-q$ recursions. One of these terms is of the form $L^{(l+1)}_{2^q}(y_{(j-1)2^q+1}^{j2^q}, ...)$, where the second coordinate is some linear combination of $\hatu_1,\hatu_2,...,\hatu_{l2^{s-q}}$. Since all the output symbols $y_{(j-1)2^q+1}^{j2^q}$ are erased, the later is always $1$. This completes the proof of the lemma. 
\end{proof}
\begin{lemma}
\label{erasure_index}
Let $q \geq s/2$. Then for any $\beta \,{<}\, \shalf$ and large enough $n$, there exists $l \leq 2^q-1$ such that $W^{(l2^{s-q}+1)}_n$ is a good bit-channel. 
\end{lemma}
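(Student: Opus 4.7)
The plan is to exhibit an explicit $l$ for which $W_n^{(l\cdot 2^{s-q}+1)}$ is a good bit-channel, rather than argue by counting. The natural candidate is $l = 2^q-1$, giving
\[
i^\star \ =\ (2^q-1)\,2^{s-q}+1 \ =\ n - 2^{s-q}+1.
\]
Tracing the parent recursion downward --- an odd index $i$ has parent $(i+1)/2$ via an ``odd'' step and an even index $i$ has parent $i/2$ via an ``even'' step --- one checks that whenever $s>q$, each of the first $s-q$ indices encountered descending from $i^\star$ is odd, so $s-q$ consecutive odd steps land at $W_{2^q}^{(2^q)}$; from there $q$ further even steps lead down to $W_1^{(1)}=W$. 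Equivalently, going upward, the path from $W$ to $W_n^{(i^\star)}$ consists of $q$ consecutive squaring (even) steps followed by $s-q$ at-most-doubling (odd) steps. The case $s=q$ is trivial: $i^\star=n$ and the path is purely even.

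I would then apply the Bhattacharyya recursions \eq{Z_recursion1}--\eq{Z_recursion2} along this path. The $q$ even steps yield $Z(W_{2^q}^{(2^q)})=Z(W)^{2^q}$ exactly, and each of the remaining $s-q$ odd steps multiplies the Bhattacharyya parameter by at most $2$, since $Z\mapsto 2Z-Z^2\le 2Z$ on $[0,1]$. This gives the key bound
\[
Z\bigl(W_n^{(i^\star)}\bigr) \ \le\ 2^{\,s-q}\,Z(W)^{2^q}.
\]
To show $i^\star \in \cG_n(W,\beta)$ it suffices that $2^{s-q}\,Z(W)^{2^q} < 2^{-n^{\beta}}/n$; setting $c = \log_2(1/Z(W))$, which is strictly positive whenever $W$ is not a pure-noise channel (the only case of interest), this rearranges to
\[
c\cdot 2^{q} \ >\ n^{\beta} + 2s - q .
\]

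Under the hypothesis $q\ge s/2$ and $n=2^s$, the left-hand side is at least $c\cdot 2^{s/2}$, while the right-hand side is at most $2^{s\beta}+2s$. For any fixed $\beta<\shalf$ one has $s\beta<s/2$, hence $2^{s\beta}=o(2^{s/2})$, and the $2s$ term is trivially dominated; the inequality therefore holds for every sufficiently large $s$, completing the proof. I do not foresee a real obstacle here: the whole content lies in the structural observation of the first paragraph, namely that choosing $l=2^q-1$ singles out the ``even-first, odd-last'' recursion path, which keeps $Z$ doubly exponentially small in $2^q$ and hence comfortably below the threshold $2^{-n^{\beta}}/n$ for every $\beta<\shalf$ as soon as $q\ge s/2$.
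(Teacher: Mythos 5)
Your proposal is correct and follows essentially the same route as the paper: choose $l=2^q-1$, use the exact squaring recursion \eq{Z_recursion2} for the first $q$ (even) steps to get $Z(W_{2^q}^{(2^q)})=Z(W)^{2^q}$, bound each of the remaining $s-q$ (odd) steps by a factor of $2$ via \eq{Z_recursion1}, and verify that $2^{s-q}Z(W)^{2^q}$ falls below the threshold $2^{-n^\beta}/n$ for large $s$ because $\beta<\shalf$. Your write-up is in fact slightly more careful than the paper's (explicit parent-index tracing, and noting $\log Z<0$ requires $W$ not be pure noise), but the argument is the same.
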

\begin{proof}
We show that $l = 2^q-1$ satisfies the required condition. Let $Z = Z(W)$. Then by \eq{Z_recursion1} and induction on $r$ it is easy to show that
\be{erasure_index1}
Z(W^{(2^q)}_{2^q}) = Z^{2^q}
\ee
Also, by \eq{Z_recursion2} and induction on $i$ we have
\be{erasure_index2}
Z(W^{((2^q-1)2^{i}+1)}_{2^{i+q}}) \leq 2^i Z(W^{(2^{r})}_{2^q})
\ee
\eq{erasure_index1} and \eq{erasure_index2} together imply that 
$$
Z(W^{((2^q-1)2^{s-q}+1)}_n) \leq 2^{s-q} Z^{2^q} \leq 2^{s/2} Z^{2^{s/2}}
$$
In order to conclude that $(2^q-1)2^{s-q}+1$ is a good bit-channel index, we need to show that for large enough $s$ 
$$
2^{s/2}Z^{2^{s/2}} < 2^{-s}2^{-2^{s\beta}}
$$
Or equivalently, by taking logarithm from both sides
$$
\frac{s}{2}+2^{s/2}\log Z < -s-2^{s\beta}
$$
which holds for large enough $s$ given the fact that $\log Z < 0$ and $\beta \,{<}\, \shalf$. This completes the proof of lemma.  
\end{proof}
\begin{theorem}
\label{erasure_thm}
Consider an Arikan's polar code of length $n$, constructed for transmission over a BSM channel $W$. If an erasure burst of length at least $2\sqrt{n}-1$ occurs, the successive cancellation decoder always fails to recover the transmitted message with probability at least $0.5$.
\end{theorem}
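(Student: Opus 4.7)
The plan is to combine the two preceding lemmas. Lemma~\ref{erasure_bound} says that if the erasure burst covers an \emph{aligned} block of length $2^q$ (one whose position is a multiple of $2^q$ plus one), then the SC decoder computes an LR equal to $1$ for every index of the form $u_{l2^{s-q}+1}$, $0\le l\le 2^q-1$. Lemma~\ref{erasure_index} says that, provided $q\ge s/2$ and $n$ is large enough, at least one such index actually lies in the information set $\cG_n(W,\beta)$. So to finish, I need to (i) show that any burst of length $2\sqrt{n}-1$ is long enough to guarantee that an aligned block of length $\sqrt{n}$ lies inside it, and then (ii) translate an LR $=1$ at an information index into an error event of probability $\ge 1/2$.

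Assume $s$ is even so that $\sqrt{n}=2^{s/2}$ (the odd-$s$ case would be handled with $q=\lceil s/2\rceil$ in the same way, at the price of a constant factor). Suppose the erasure burst occupies positions $[a,a+L-1]$ with $L=2\sqrt{n}-1=2^{s/2+1}-1$. The aligned blocks of length $2^{s/2}$ are the intervals $[(j-1)2^{s/2}+1,\,j2^{s/2}]$ for $j\ge 1$; they partition $[n]$. The worst case for containment is when $a\equiv 2 \pmod{2^{s/2}}$, i.e.\ $a$ sits just after the start of an aligned block; then the next aligned block begins at position $a+2^{s/2}-1$ and ends at position $a+2\cdot 2^{s/2}-2=a+L-1$, which is precisely the right end of the burst. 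Hence $L=2\sqrt{n}-1$ is the exact threshold that guarantees at least one aligned block of length $2^{s/2}$ lies in $[a,a+L-1]$. Taking $q=s/2$ and the appropriate $j$, Lemma~\ref{erasure_bound} yields $L_n^{(l2^{s/2}+1)}(y_1^n,\hatu_1^{l2^{s/2}})=1$ for every $l\in\{0,1,\dots,2^{s/2}-1\}$.

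By Lemma~\ref{erasure_index} (applied with $q=s/2$), for large enough $n$ there exists $l^{\star}\le 2^{s/2}-1$ with $(l^{\star}2^{s/2}+1)\in \cG_n(W,\beta)$, so the bit $u_{l^{\star}2^{s/2}+1}$ carries a uniformly random information bit. Because the SC decoder's computed likelihood ratio for this bit equals $1$, the decoder's a posteriori distribution given $y_1^n$ and the correctly frozen/previously-decoded bits is uniform on $\{0,1\}$, and any tie-breaking rule (deterministic or random) produces the correct value with probability at most $1/2$. Therefore the decoder fails to recover this information bit, and hence the transmitted message, with probability at least $1/2$.

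The main obstacle is really just the covering argument in step~(i): one must argue cleanly that $2\sqrt{n}-1$ is the \emph{sharp} threshold guaranteeing a contained aligned block of length $\sqrt{n}$, since a shorter burst could slip between two aligned boundaries and the whole scheme of invoking Lemma~\ref{erasure_bound} would collapse. Once this is pinned down, the remainder is a direct composition of the two lemmas together with the elementary observation that an LR of exactly $1$ at an information coordinate forces error probability $1/2$ on that coordinate.
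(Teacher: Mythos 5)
Your proposal is correct and follows essentially the same route as the paper's own proof: locate an aligned block of length $2^{s/2}$ inside the burst, apply Lemma~\ref{erasure_bound} with $q=s/2$ to force an LR of $1$ at every index of the form $l2^{s/2}+1$, and invoke Lemma~\ref{erasure_index} to find such an index in the information set, giving error probability $1/2$ on that bit. Your explicit verification that $2\sqrt{n}-1$ suffices for the covering step is a worthwhile elaboration of a containment claim the paper simply asserts.
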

\begin{proof}
Observe that any burst of length $2\sqrt{n}-1 = 2 \times 2^{s/2} - 1$ includes a burst of length $2^{s/2}$ with starting index $(j-1)2^{s/2}+1$ and ending index $j2^{s/2}$, for some $j \leq 2^{s/2}$. Then by \Lref{erasure_index}, there exists a good bit-channel with index of the form $l2^{s/2}+1$. By \Lref{erasure_bound}, the LR of $u_{l2^{s/2}+1}$ is always $1$ and therefore, there is a probability of error $0.5$ associated with this information bit. This completes the proof of theorem.
\end{proof}
The above theorem shows that polar codes are very weak with respect to erasure bursts. In fact, a vanishing fraction of erasures lead to a failure in the successive cancellation decoder with probability $0.5$, while it does not change the effective capacity of channel asymptotically. Next, we show that RS-polar concatenated codes can perform very well in this regard.

Let the total length of the RS-polar concatenated code be $N=nm$, where $n$ is the length of polar code and $m$ is the length of the RS code. Let also $d$ be the minimum distance of the RS code.
\begin{lemma}
\label{erasure_recover}
The RS-Polar concatenated code, with outer code of minimum distance $d$ and inner code of length $n$, can recover from erasure bursts as long as the length of the burst is at most $(d-2)n+1$.
\end{lemma}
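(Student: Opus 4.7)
The plan is to translate the erasure burst on the transmitted bit stream into a bound on the number of erased symbols per outer RS codeword, and then invoke the MDS property of Reed--Solomon codes.

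First I would fix notation: in the concatenation of Section~\ref{sec:three}, the $m$ inner polar codewords of length $n$ are placed one after another in the channel, and by construction the $j$-th polar codeword carries exactly one symbol of each of the $r$ outer RS codewords. Consequently, if the decoder is forced to declare the whole output of some polar codeword as unreliable, the damage seen by each outer RS codeword is only a single symbol erasure at the position corresponding to that polar codeword.

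The core step is a counting lemma for burst alignment: a burst of length $L$ can span at most $N$ polar codewords if and only if $L \ge (N-2)n + 2$, since spanning $N$ codewords requires at least one erased bit in the first and last touched codewords plus all $(N-2)n$ bits of the codewords strictly between them. Equivalently, the maximum number of polar codewords that any burst of length $L$ can touch is $\lfloor (L-2)/n \rfloor + 2$, and this maximum is attained by starting the burst at the last bit of a polar codeword. Plugging in $L = (d-2)n+1$ gives
\[
\left\lfloor \frac{(d-2)n - 1}{n} \right\rfloor + 2 \;=\; (d-3) + 2 \;=\; d-1.
\]

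To finish, I would observe that since the erasure positions are known to the receiver, it can conservatively declare the symbol contributed by every touched polar codeword as a symbol erasure for the outer decoder, regardless of whether the inner SC decoder could in fact recover from the partial erasures inside the polar block (this is the worst-case assumption and is consistent with \Tref{erasure_thm}). Each outer RS codeword then sees at most $d-1$ erased symbols at known positions, and by the MDS property an $[m,m-d+1,d]$ RS code recovers from any $d-1$ erasures. Hence every outer codeword, and therefore every information bit, is recovered. I do not anticipate a real obstacle: the only step requiring care is the floor arithmetic that pins down the worst-case burst alignment, and everything else is immediate from the interleaved structure of the concatenation together with the MDS property.
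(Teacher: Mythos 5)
Your proposal is correct and follows essentially the same route as the paper's proof: bound the number of polar codewords a burst of length $(d-2)n+1$ can touch by $d-1$, pessimistically declare every touched polar block as a symbol erasure for each outer codeword, and invoke the fact that a distance-$d$ RS code corrects $d-1$ erasures. The only difference is that you make the burst-alignment counting explicit via the floor arithmetic, which the paper states as a one-line observation.
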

\begin{proof}
Observe that a burst of length $(d-2)n+1$ overlaps with at most $d-1$ polar codewords. In the worst case, assume that any polar decoder that encounters at least one erasure declares failure and output erasures to the outer decoder. At most $d-1$ of the inner polar decoders fail in this scenario. The minimum distance of the outer RS code is $d$ which means that it can correct $d-1$ or less erasures. Therefore, RS-polar concatenated code can correct all the erasures resulted from the erasure burst.
\end{proof}
Suppose that $m \leq n$ and the minimum distance of the RS code is at least $4$. Then
$$
(d-2)n+1 \geq 2n+1 \geq 2\sqrt{N}+1
$$
Therefore, a polar code of length $N$ fails to recover from an erasure burst of length $(d-2)n+1$ by \Tref{erasure_thm}, while our RS-polar code can recover from the erasure burst. By fixing $d$ to be any positive integer and letting $n$ to go to infinity, the total probability of error of the concatenated code is $m2^{-n^{\beta}}$, which goes to zero as $m \leq n$. The only constraint on $m$ is that $m \leq n$. Also, it is assumed that $m$ goes to infinity so that the rate of outer RS code approaches $1$. Since the polar code is capacity-achieving, the total concatenated code is also capacity-achieving. All of these happen while a non-concatenated polar code of equal length $N$ fails with probability at least $0.5$ in this scenario, for large enough $N$.

%=======================================================================%
%                                                                       %
%       4. Simulated Numerical Performance       %
%                                                                       %
%=======================================================================%
\section{Simulated Numerical Performance \label{sec:fourb}}
\label{sec:six}

\subsection{Simulated performance over AWGN channel}

Transmission over AWGN channel is assumed. The inner polar code of length $2^9 = 512$ and outer Reed-Solomon code of length $15$ over $\F_{2^4}$ are considered. The rates of inner and outer codes are designed such that the total rate of scheme is $1/3$.  We use the RA method explained in Section \ref{rate-adaptive} to construct the concatenated code. The actual probability of error of the bit-channels under SC decoding corresponding to a polar code of length $n=512$ are estimated over an AWGN channel with $E_b/N_0=2$ dB. We take a Monte-Carlo simulation-based approach for this estimation as discussed in Section\,\ref{rate-adaptive}. As an alternative way, one can use the method proposed in \cite{TV}. To design the outer RS code, we follow the criterion proposed for the RA construction in Section\,\ref{rate-adaptive} with the difference that the total rate of the code is fixed rather than the probability of error. To guarantee the total design rate of the concatenated code, the construction is optimized over different possible rates $k/n$ of the inner polar codes as well, and the rate-adaptive construction method is as follows:
\begin{itemize}
\item For $k$ divisible by $4$ and between $170$ (inner rate $1/3$) and $256$ (inner rate $1/2$) do the following.
\item
Pick the best $k$ bit-channels that have smaller probability of errors and sort them with respect to their index.
\item
Set target probability of error $P_e$ for each of the small sub-blocks of length $4$ ($2^4$ is the size of the alphabet for RS code). This replaces $t\cE/k$ in \eq{redundancy_assign}. Then for $i=1,2,..,k/4$, find the $\tau_i$-error correcting RS code according to the criterion proposed in \eq{redundancy_assign}. 
\item
Calculate the total rate of the scheme and compare it with $1/3$. If it is almost $1/3$, then calculate the total FEP and move on to the next $k$. Otherwise, adjust the probability of error $P_e$ accordingly and repeat these steps.
\end{itemize}
At the end, $k = 204$ is picked which results in the lowest frame error rate. 

The performance of the proposed construction with discussed decoding techniques is shown in Figure\,\ref{plot4}. The results are compared with a polar code of the same length $512$. Since all the $15$ inner polar codes in the RS-polar concatenated code are decoded in parallel, the two schemes have the same decoding latency. For the concatenated scheme, the codeword error rate of inner polar codes is defined as the error rate of the inner polar codeword. The aim is to have a fair comparison with a polar code of the same block size $512$ when the rates are equal, where the rate loss due to RS outer code is taken into account. At an block error probability (BLER) of $10^{-4}$, it is observed that the proposed rate-adaptive (RA) concatenated RS-polar code with the proposed SC decoding of the concatenated code has more than $1$ dB SNR gain over the non-concatenated polar code with the same decoding latency. The GMD decoding, on top of SC, helps at higher BLERs, e.g. $0.1$ dB further SNR gain at BLER$=10^{-2}$. Furthermore, there is about $0.2$ dB SNR gain using GMD decoding with approximate ML (GMD-AML) on top of the SC decoding and $0.3$ dB further SNR gain using our proposed SC-GMD-ML decoding of the concatenated code. Also, the proposed SC-GMD-ML decoding algorithm offers more than 2 dB SNR gain over conventional serial decoding of the RS-polar concatenated code, in which the outer RS code is a $(15,11)$ code and the inner polar code is a $(512,232)$ code. The performance of this concatenated scheme is also compared with the outer component Reed-Solomon code with GMD decoding. The $(63,21)$ RS code over $\F_{2^6}$ is picked, where its rate is equal to that of our scheme. The length of its binary representation is $6 \times 63 = 378$ which is close to that of the inner polar code. It is observed that the performance of the RS code with GMD decoding is about $9$ dB SNR worse than that of the proposed scheme with GMD decoding. Therefore, it is not shown in Figure\,\ref{plot4}.   

\begin{figure}[h]
\centering
% This file was created by matlab2tikz v0.0.5.
% Copyright (c) 2008--2010, Nico Schlömer <nico.schloemer@ua.ac.be>
% All rights reserved.
%
% The latest updates can be retrieved from
%  http://win.ua.ac.be/~nschloe/content/matlab2tikz/
% and
%  http://www.mathworks.com/matlabcentral/fileexchange/22022 .
% where you can also make suggestions and rate matlab2tikz.

\begin{tikzpicture}

% defining custom colors
\definecolor{mycolor1}{rgb}{1,0,1}
\definecolor{mycolor2}{rgb}{0,1,1}

% Axis at [0.13 0.11 0.78 0.81]
\begin{semilogyaxis}[
scale only axis,
width = 2.75in,
height = 2.0625in,
xmin=1, xmax=6,
ymin=1e-06, ymax=1,
xlabel={{\footnotesize $E_b/N_0$ [dB]}},
ylabel={{\footnotesize Block error rate}},
title = {{\footnotesize $n = 512$, rate = $1/3$}},
xmajorgrids,
ymajorgrids,
yminorgrids,
legend entries={{\scriptsize polar code},{\scriptsize regular RS-polar},{\scriptsize RA-SC},{\scriptsize RA-SC-GMD}, {\scriptsize RA-SC-GMD-AML}, {\scriptsize RA-SC-GMD-ML}},
legend style={nodes=right}]

\addplot [
color=blue,
solid,
mark=+,
mark options={solid}
]
coordinates{
 (1,0.3226)
 (1.5,0.1167)
 (2,0.0458)
 (2.5,0.0092)
 (3,0.0018)
 (3.5,0.00029032)
 (4,3.6863e-05)

};

\addplot [
color=red,
solid,
mark=*,
mark options={solid}
]
coordinates{
 (2.5,0.9804)
 (3,0.5701)
 (3.5,0.072)
 (4,0.0029)
 (4.5,0.00011111)

};

\addplot [
color=green,
solid,
mark=triangle,
mark options={solid}
]
coordinates{
 (1,0.8092)
 (1.5,0.2025)
 (2,0.015)
 (2.5,0.0004607)
 (2.7,6.1667e-05)
 (3,4.4444e-06)

};

\addplot [
color=mycolor1,
solid,
mark=x,
mark options={solid}
]
coordinates{
 (1,0.7261)
 (1.5,0.117)
 (2,0.007)
 (2.25,0.0017)
 (2.5,0.00032275)
 (2.7,6.0317e-05)

};

\addplot [
color=mycolor2,
solid,
mark=o,
mark options={solid}
]
coordinates{
 (1,0.5972)
 (1.5,0.0815)
 (1.75,0.0201)
 (2,0.0046)
 (2.25,0.001)
 (2.5,0.00011818)

};

\addplot [
color=black,
solid,
mark=square,
mark options={solid}
]
coordinates{
 (1,0.5138)
 (1.5,0.0355)
 (1.75,0.0073)
 (2,0.0011)
 (2.25,7.7348e-05)

};

\end{semilogyaxis}

\end{tikzpicture}
\caption{Performance of the concatenated scheme using GMD-ML decoding technique}
\label{plot4}
\vspace{-0.5cm}
\end{figure}
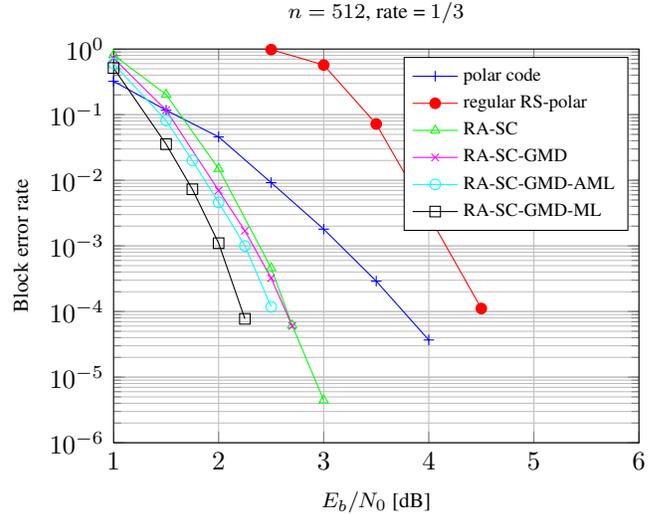

\subsection{Trade-off between polarization order and RS decoding radius} 

It is interesting to analyze the trade-off between the lengths of inner and outer codes assuming that the total length of the concatenated code is fixed. The longer the inner polar code is, the further polarization happens which results in better performance in the inner code. On the other hand, longer outer RS code provides better error correction capability and also a wider range of rates to be chosen by the rate adaptive scheme, thereby making the concatenated code more efficient. 

Recall the proof of \Tref{theorem1} wherein the code parameters are set as follows: the inner block length is $n = N^{\epsilon}$ and the outer block length is $m = N^{1-\epsilon}$. Then the error decay rate is bounded by $2^{-N^{1-\epsilon}}$ as proved in \Tref{theorem1}. Therefore, if $\epsilon$ is decreased, the bound on the error decay rate will be improved, asymptotically. As a result, it is well-justified to say that the smaller the inner polar code is, the further improvement in error decay rate happens in as asymptotic sense. 

In finite block length, there is no analytical way to determine the optimum parameters for inner and outer block lengths. In this subsection, we provide simulation results in an attempt to find out the optimum parameters when the total block length is about $2^{14}$. The total rate of the scheme is fixed and is equal to $1/2$. The rate-adaptive scheme, except for the case of $n=16$, in which there is only one outer RS code, is designed at  $E_b/N_0 = 2$ dB. The FEP varies depending on the scheme and is set for each of them separately in order to maintain the same rate of $1/2$ for the total concatenated code. More details on each design of the concatenated code is as follows: For the case of RS code over $\F_{2^5}$ of length $31$ and inner polar code of length $512$ (simply denoted by RS($31$)-polar($512$)), the optimum value of $k$, the information block length of polar code, is 295. There are $295/5=59$ outer RS codes in total. For the case of RS($63$)-polar($256$), $k = 144$ and there are $24$ outer RS codes of length $63$ over $\F_{2^6}$. Consequently, for the concatenated code RS($127$)-polar($128$), $k=70$ and there are $10$ outer RS codes of length $127$ over $\F_{2^7}$. We have also chosen the extreme case of RS($1023$)-polar($16$) to compare with other schemes, where $k=10$ and there is only one RS code of length $1023$ over $\F_{2^{10}}$. The simulation results are shown in Figure\,\ref{plot2}. It can be observed that the concatenated code RS($31$)-polar($512$) shows a better performance comparing to the other schemes. Notice that the extreme case of RS($1023$)-polar($16$) has a much sharper slope of frame error rate in terms of SNR comparing to other cases. However, its performance is much worse which may be due to the relatively very small number of polarization levels. 

\begin{figure}[h]
\centering
% This file was created by matlab2tikz v0.0.5.
% Copyright (c) 2008--2010, Nico Schlömer <nico.schloemer@ua.ac.be>
% All rights reserved.
%
% The latest updates can be retrieved from
%  http://win.ua.ac.be/~nschloe/content/matlab2tikz/
% and
%  http://www.mathworks.com/matlabcentral/fileexchange/22022 .
% where you can also make suggestions and rate matlab2tikz.

\begin{tikzpicture}

% defining custom colors
\definecolor{mycolor1}{rgb}{1,0,1}

% Axis at [0.13 0.11 0.78 0.81]
\begin{semilogyaxis}[
scale only axis,
width = 2.75in,
height = 2.0625in,
xmin=1, xmax=6,
ymin=0.0001, ymax=1,
xlabel={$E_b/N_0$},
ylabel={frame error rate},
xmajorgrids,
ymajorgrids,
yminorgrids,
legend entries={{\scriptsize RS(15)-polar(1024)}, {\scriptsize RS(31)-polar(512)}, {\scriptsize RS(63)-polar(256)}, {\scriptsize RS(127)-polar(128)}, {\scriptsize RS(1024)-polar(16)}},
legend style={nodes=right}]

\addplot [
color=blue,
solid,
mark=triangle,
mark options={solid}
]
coordinates{
 (1,0.9615)
 (1.5,0.1887)
 (2,0.0054)
 (2.25,0.0012)

};

\addplot [
color=red,
solid,
mark=x,
mark options={solid}
]
coordinates{
 (1,0.9709)
 (1.5,0.1235)
 (2,0.0026)
 (2.25,0.00056769)

};

\addplot [
color=black,
solid,
mark=o,
mark options={solid}
]
coordinates{
 (1,1)
 (1.5,0.4425)
 (2,0.0061)
 (2.25,0.000555)

};

\addplot [
color=green,
solid,
mark=square,
mark options={solid}
]
coordinates{
 (1,1)
 (1.5,0.7463)
 (2,0.093)
 (2.5,0.011)
 (3,0.003)

};

\addplot [
color=mycolor1,
solid,
mark=+,
mark options={solid}
]
coordinates{
 (3.2,0.9901)
 (3.3,0.3849)
 (3.4,0.1756)
 %(3.5,0.0937)
 (3.6,0.001)
 (3.7,0.000129)

};

\end{semilogyaxis}

\end{tikzpicture}
\caption{Trade-offs between the lengths of inner and outer codes}
\label{plot2}
\vspace{-0.5cm}
\end{figure}
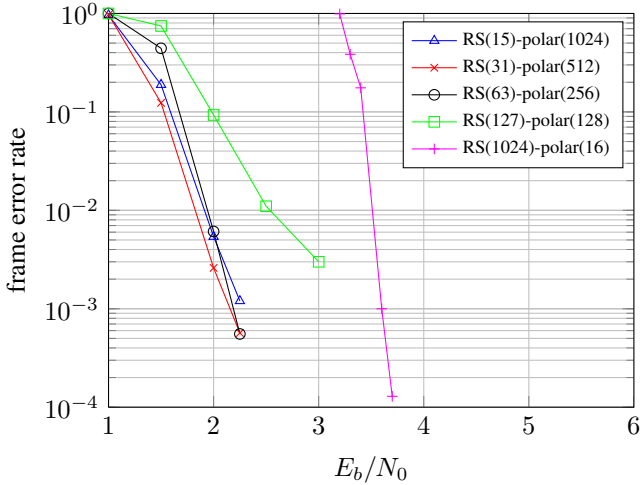

\subsection{Performance gains from concatenation over channels with erasure bursts}

We have also observed the advantage of our RS-polar concatenated code comparing to non-concatenated polar codes over channels with erasure bursts through simulations. A Gilbert-Elliot model for producing erasure bursts is considered. This model is based on a Markov chain with two states for the channel. The channel is either in the good state, denoted by G, or in the bad state, denoted by B. When the channel is in state G, stays in this state with probability $P$ and changes to state B with probability $1-P$. Likewise, when the channel is in state B, stays in this state with probability $Q$ and changes to state G with probability $1-Q$. 

For simulation, state G is considered as binary erasure channel with probability of erasure $0.1$ and state B is always an erasure. In the RS-polar concatenated code, the length of inner polar code is set as $2^9=512$. The outer RS code is a $(15,11)$ code which can correct up to four erasures. The total length of the concatenated code is $N = 512*15 = 7680$. The rate of inner polar code is set as $0.68$ so that the total rate of the concatenated code is $0.5$. The transition probabilities in the Gilbert-Elliot model are picked as $P=0.9999$ and $Q=0.99$. This way, the average length of a run of bad channel states is $1/(1-Q) = 100$, which is close to $\sqrt{N}$. Likewise, the average length of consecutive good channel states is $1/(1-P)=10000$, which is close to $N$. This model resembles the conditions in \Tref{erasure_thm}.    
Intuitively, the inner polar codes corrects the erasures resulted from the the BEC$(0.1)$ in good state while the outer RS code corrects the failures resulted from erasure bursts. The probability of frame error estimated as $6 \times 10^{-4}$ using Monte Carlo simulations. On the other hand, the non-concatenated polar code of length $2^{13} = 8192$ does not perform well in this scenario, as expected. The probability of frame error is estimated as $0.1$ using Monte Carlo simulations. Notice that in this scenario, the computed likelihood ratios are either $0$, $1$ or $\infty$ when the symbols are transmitted through binary erasure channel. If the computed likelihood ratio is either $0$ or $\infty$, it means that the corresponding input bit is recovered with no error. Otherwise, the corresponding input bit is erased. Therefore, the successive cancellation decoder can not be improved using list decoding algorithm proposed in \cite{TV2}.

%=======================================================================%
%                                                                       %
%       4. Conclusions and future work       %
%                                                                       %
%=======================================================================%
\section{Discussions and conclusion}
\label{sec:seven}

In this paper, we proved that by carefully concatenating the recently invented polar codes with Reed-Solomon codes, a significant improvement in the error-decay rate compared to non-concatenated polar codes is possible which ultimately fills the gap with the information theoretic bound. The parameters of the scheme can be set to inherit the capacity-achieving property of polar codes while working in the same regime of low complexity. The proposed concatenated code is shown to perform well over channels with erasure bursts, while it is proved that the original polar codes are very weak in this regard. We developed several construction methods and decoding techniques to improve the performance at finite block lengths, which is a step to making polar codes more practical. There are some directions for future work as follows. The methods described in this paper can be in general applied to concatenation of polar codes with non-binary block codes. The construction and decoding methods can also be used when the outer code is a binary code by grouping each $t$ bits, where $t$ is a complexity parameter to be optimized. If the outer code has a low complexity soft decoding algorithm, then the decoding techniques based on GMD decoding of RS codes can be extended to this case as well.

\end{document}